\newcommand{\bt}[1]{\ensuremath{bt(#1)}}
\newcommand{\dbt}[1]{\ensuremath{dbt(#1)}}
\newcommand{\tw}[1]{\ensuremath{\mathsf{\MakeUppercase{#1}}}}
\newcommand{\cn}[1]{\ensuremath{\mathsf{#1}}}
\newcommand{\myparagraph}[1]{\medskip\noindent\textbf{#1} }
\newtheorem{pattern}{Forbidden~Pattern}
\newcommand{\Ered}{\ensuremath{E_{\text{r}}} }
\newcommand{\Egreen}{\ensuremath{E_{\text{g}}} }
\newcommand{\Eblue}{\ensuremath{E_{\text{b}}} }
\newcommand{\EredT}{\ensuremath{T_{\text{r}}} }
\newcommand{\EredN}{\ensuremath{N_{\text{r}}} }
\newcommand{\Dred}{\ensuremath{V^*_{\text{r}}} }
\newcommand{\Dgreen}{\ensuremath{V^*_{\text{g}}} }
\newcommand{\Dblue}{\ensuremath{V^*_{\text{b}}} }
\newcommand{\cupdot}{\sqcup}
\title{On Dispersable Book Embeddings}
\author{Jawaherul~Md.~Alam\inst{1} \and 
Michael~A.~Bekos\inst{2} \and 
Martin~Gronemann\inst{3} \and
Michael~Kaufmann\inst{2} \and 
Sergey~Pupyrev\inst{1}}
\institute{
Dep. of Computer Science, University of Arizona, Tucson, USA
\\\email{\{jawaherul,spupyrev\}@gmail.com}
\and
Institut f{\"u}r Informatik, Universit{\"a}t T{\"u}bingen, T{\"u}bingen, Germany
\\\email{\{bekos,mk\}@informatik.uni-tuebingen.de}
\and
Institut f\"ur Informatik Universit\"at zu K\"oln, K\"oln, Germany
\\\email{gronemann@informatik.uni-koeln.de}
}
\begin{document}
\maketitle

\pagenumbering{arabic}

% ============================================================
\begin{abstract} 
In a \emph{dispersable book embedding}, the vertices of a given graph~$G$ must be ordered along a line $\ell$, called \emph{spine}, and the edges of $G$ must be drawn at different half-planes bounded by $\ell$, called \emph{pages of the book}, such that: 
\begin{inparaenum}[(i)]
\item no two edges of the same page cross, and
\item the graphs induced by the edges of each page are $1$-regular.
\end{inparaenum} 
The minimum number of pages needed by any dispersable book embedding of $G$ is referred to as the \emph{dispersable book thickness} $dbt(G)$ of $G$. Graph $G$ is called \emph{dispersable} if $\dbt{G} = \Delta(G) $ holds (note that $\Delta(G) \leq \dbt{G}$ always holds). 

Back in 1979, Bernhart and Kainen conjectured that any $k$-regular bipartite graph $G$ is dispersable, i.e., $\dbt{G}=k$. 
%Overbay, in her Ph.D.~thesis, observed that bipartiteness is an necessary condition in this conjecture, as any $k$-regular dispersable graph must be in fact bipartite. 
In this paper, we disprove this conjecture for the cases $k=3$ (with a computer-aided proof), and $k=4$ (with a purely combinatorial proof). In particular, we show that the Gray graph, which is $3$-regular and bipartite, has dispersable book thickness four, while the Folkman graph, which is $4$-regular and bipartite, has dispersable book thickness five. On the positive side, we prove that $3$-connected $3$-regular bipartite planar graphs are dispersable, and conjecture that this property holds, even if $3$-connectivity is relaxed.   
\end{abstract}
% ============================================================

% ============================================================
\section{Introduction}
\label{sec:introduction}
The book-embedding problem is a well studied problem in graph theory due to its numerous applications (see, e.g.,~\cite{CLR87,HLR92,Ros83,Tar72}) with early results dating back in early 1970s~\cite{Oll73}. The input in this problem is a graph $G$ and the task is to find a \emph{linear order} of the vertices of $G$ along a line $\ell$, called the \emph{spine of the book}, and an assignment of the edges of $G$ to different half-planes, called \emph{pages of the book}, delimited by the spine, such that no two edges of the same page cross (see Fig.~\ref{fig:dispresable-book-embedding} for an illustration). The minimum number of pages that are required by any book embedding of $G$ is commonly referred to as \emph{book thickness} (but also as \emph{stack number} or \emph{page number}) and is denoted by $\bt{G}$.

For planar input graphs, the literature is really rich. The most notable result is due to Yannakakis~\cite{Yan89}, who proved that the book thickness of a planar graph is at most four improving upon previous results~\cite{DBLP:conf/stoc/BussS84,Hea84}. Better upper bounds are only known for restricted subclasses, such as planar $3$-trees~\cite{Hea84} (which fit in books with three pages), subgraphs of planar Hamiltonian graphs~\cite{BK79}, $4$-connected planar graphs~\cite{NC08}, planar graphs without separating triangles~\cite{KOL07}, Halin graphs~\cite{CNP83}, bipartite planar graphs~\cite{DBLP:journals/dcg/FraysseixMP95}, planar $2$-trees~\cite{CLR87}, planar graphs of maximum degree~$3$ or $4$~\cite{Hea85,BGR14} (which fit in books with two pages), and outerplanar graphs~\cite{BK79} (which fit in single-page books). Note that, in general, the problem of testing, whether a maximal planar graph has book thickness two, is equivalent to determining whether it is Hamiltonian, and thus is NP-complete~\cite{Wig82}.

For non-planar input graphs, the literature is significantly limited. It is known that the book thickness of a complete $n$-vertex graph is $\Theta(n)$~\cite{BK79}, while sublinear book thickness have all graphs with subquadratic number of edges~\cite{Mal94a}, subquadratic genus~\cite{Mal94b} or sublinear treewidth~\cite{DW07}. The book thickness is known to be bounded only for bounded genus graphs~\cite{Mal94b} and, more generally, all minor-closed graph families~\cite{Bla03}. The reader is referred to~\cite{DBLP:journals/dmtcs/DujmovicW04} for a survey.

\begin{figure}[t]
	\centering
	\subfloat[\label{fig:dispresable-book-embedding}{}]
	{\includegraphics[page=1,scale=0.5]{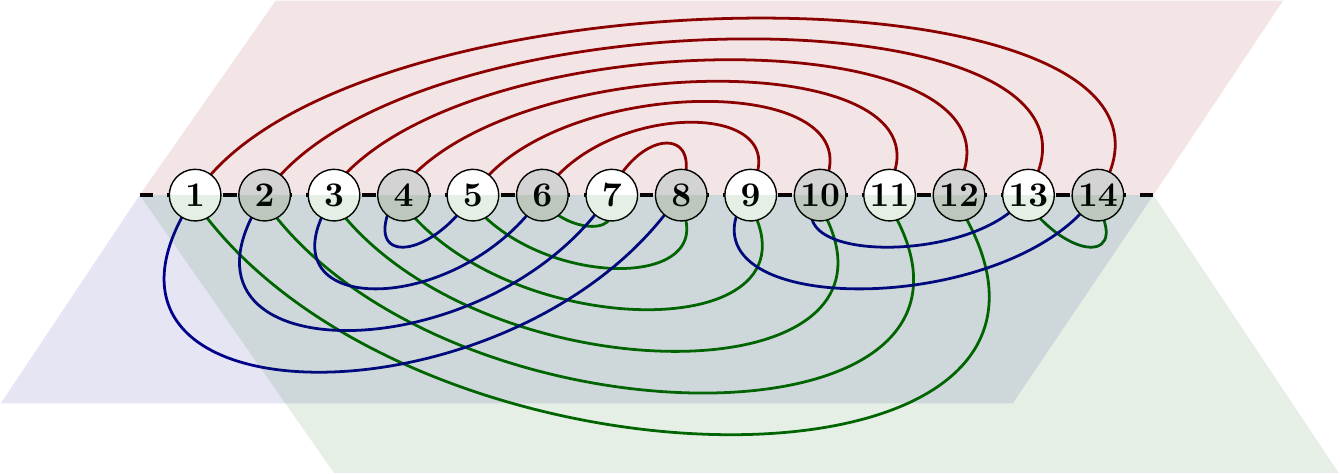}}
	\hfil
	\subfloat[\label{fig:circular-book-embedding}{}]
	{\includegraphics[page=2,scale=0.5]{heawood}}
	\caption{%
	(a)~A dispersable book embedding of the $3$-regular bipartite Heawood graph~\cite{Ger09} with $3$ pages~(taken from \cite{Kai11}), and
	(b)~an equivalent circular embedding with a $3$-edge-coloring, in which no two edges of the same color~cross.}
	\label{fig:heawood}
\end{figure}

In this paper, we focus on \emph{dispersable book embeddings}~\cite{BK79}, in which the subgraphs induced by the edges of each page are additionally required to be $1$-regular (i.e., matchings). The \emph{dispersable book thickness} of a graph $G$, denoted by $\dbt{G}$, is defined analogously to the book thickness as the minimum number of pages required by any dispersable book embedding of $G$. So, by definition $\dbt{G} \geq \Delta(G)$ holds, where $\Delta(G)$ is the maximum degree of $G$. Finally, a graph $G$ is called \emph{dispersable} if and only if $\dbt{G} = \Delta(G)$; see Fig.~\ref{fig:dispresable-book-embedding}. 

We note here that any book embedding with $k$ pages can be equivalently transformed into a circular embedding with a $k$-edge-coloring, in which no two edges of the same color cross, and vice versa~\cite{BK79,Hea84}. In the dispersable case, the graphs induced by the edges of the same color must additionally be $1$-regular; see Fig.~\ref{fig:circular-book-embedding}. We refer to the order, in which the vertices appear along the boundary of a circular embedding with $\Delta(G)$ colors (or, equivalently along the spine of a dispersable book emdedding with $\Delta(G)$ pages), if any, as \emph{dispersable~order}.

Dispersable book embeddings were first studied by Bernhart and Kainen~\cite{BK79}, who back in 1979 proved that the book thickness of the graph formed by the cartesian product of a dispersable bipartite graph $B$ and an arbitrary graph $H$ is upper bounded by the degree of $B$ plus the book thickness of $H$ (that is, $\bt{B \times H} \leq \bt{H} + \Delta(B)$), and posed the following conjecture (see also~\cite{Kai11}):
\begin{conjecture}[Bernhart and Kainen, 1979]\label{con:BK79}
Every $k$-regular bipartite graph $G$ is dispersable, that is, $\dbt{G}=k$.
\end{conjecture}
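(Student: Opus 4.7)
The plan is to start from the classical fact, due to K\"onig, that every $k$-regular bipartite graph $G$ admits a proper $k$-edge-coloring, i.e., a decomposition $E(G)=M_1 \cupdot \dots \cupdot M_k$ into perfect matchings. Once such a decomposition is fixed, proving the conjecture reduces to finding a single cyclic order of $V(G)$ that is simultaneously non-crossing with respect to every $M_i$ -- equivalently, a dispersable order in the sense of the introduction. The base cases are easy: $k=1$ is trivial and, for $k=2$, $G$ is a disjoint union of even cycles, each admitting the obvious alternating order.

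For general $k$, I would first attempt inductive/structural approaches. One natural attempt is to use a Hamiltonian cycle of $G$ as the spine and then distribute the remaining $k-1$ perfect matchings across $k-1$ non-crossing pages. A second attempt is to delete a short cycle or a matched pair of degree-$k$ vertices, apply induction to the smaller $(k-1)$- or $k$-regular bipartite graph, and splice the recursive dispersable order back in. In parallel, I would build intuition from the classical small cubic bipartite graphs (Heawood, M\"obius--Kantor, Pappus, Desargues) by constructing explicit dispersable orders, as in Fig.~\ref{fig:dispresable-book-embedding}.

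The main obstacle I expect is the \emph{global coupling} of the $k$ pages: the vertex ordering is a single object that must satisfy $k$ independent non-crossing constraints at once, and local surgery typically destroys some of them. Moreover, the Hamiltonian-backbone strategy presupposes a property that cubic bipartite graphs do not always have -- the Gray graph, for instance, is a non-Hamiltonian $3$-regular bipartite graph, so that route cannot give a general proof. Taken together, these obstructions would already make me seriously doubt the conjecture.

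If the direct attempts fail, I would flip the problem and hunt for a counterexample. The most promising candidates are vertex-transitive $k$-regular bipartite graphs of high girth, since high girth forces long-range constraints on any admissible vertex order, and vertex-transitivity rules out exploiting ``irregular'' local structure. For $k=3$ the Gray graph is the natural target; for $k=4$ one would scan small highly symmetric $4$-regular bipartite graphs such as the Folkman graph. The decisive step in either case would be to argue -- ideally through a forbidden-pattern lemma in the sense of the \texttt{pattern} environment declared above, or, failing an elegant argument, by computer enumeration of cyclic orders up to the graph's symmetry group -- that no single vertex ordering can make all $k$ perfect matchings simultaneously non-crossing, thereby refuting the conjecture rather than proving it.
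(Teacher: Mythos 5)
Your proposal lands on essentially the same route as the paper: the conjecture is in fact false, and the paper refutes it with exactly the two graphs you name, using a forbidden-pattern case analysis for the Folkman graph ($k=4$) and a computer search (a SAT formulation of the ordering problem) for the Gray graph ($k=3$). Two small factual corrections that do not affect the plan: the Gray graph is actually Hamiltonian (so its role here is not explained by non-Hamiltonicity), and neither the Gray nor the Folkman graph is vertex-transitive --- indeed the paper leaves dispersability of vertex-transitive regular bipartite graphs as an open question, so your ``vertex-transitive, high girth'' heuristic points in the opposite direction from where the counterexamples were found.
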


Clearly, Conjecture~\ref{con:BK79} holds for $k \leq 2$. As every $k$-regular bipartite graph admits a proper $k$-edge-coloring, Conjecture~\ref{con:BK79} implies that the dispersable book thickness of a regular bipartite graph equals its chromatic index. Overbay~\cite{Over98}, who continued the study of dispersable embeddings in her Ph.D.\ thesis, observed that not every proper $k$-edge coloring yields a dispersable book embedding and that bipartiteness is a necessary condition in the conjecture of Bernhart and Kainen. She also proved that several classes of graphs are dispersable; among them trees, binary cube graphs, and complete graphs. 

\myparagraph{Our contribution:} In Section~\ref{sec:4-regular}, we disprove Conjecture~\ref{con:BK79} for the case $k=4$, by showing, with a purely combinatorial proof, that the Folkman graph (see~Fig.~\ref{fig:folkman}), which is $4$-regular and bipartite, has dispersable book thickness five. In Section~\ref{sec:3-regular}, we first show how one can appropriately adjust a relatively recent SAT-formulation of the book embedding problem~\cite{BKZ15} for the dispersable case, and, using this formulation, we demonstrate that the Gray graph (see~Fig.~\ref{fig:gray}), which is $3$-regular and bipartite, has dispersable book thickness four (thus, disproving Conjecture~\ref{con:BK79} also for the case $k=3$). Note that, since both graphs are not planar, their book thickness is at least three. Figs.~\ref{fig:folkman-book-embedding} and~\ref{fig:gray-book-embedding} demonstrate that it is exactly three. In Section~\ref{sec:3-regular-planar}, we show that $3$-connected $3$-regular bipartite planar graphs are dispersable. Our findings lead to a number of interesting research directions, which we list in Section~\ref{sec:conclusions}, where we also conjecture that all (i.e., not necessarily $3$-connected) $3$-regular planar bipartite graphs are dispersable. 

\section{The Dispersable Book Thickness of the Folkman Graph}
\label{sec:4-regular}
In this section, we study the book thickness of the Folkman graph~\cite{FOLKMAN1967215}, which can be constructed in two steps starting from $K_5$ as follows. First, we replace every edge by a path of length two to obtain a bipartite graph (see Fig.~\ref{fig:folkman_2}). Then, we add for every vertex of the original $K_5$ a copy with the same neighborhood (see Fig.~\ref{fig:folkman_3}). The resulting graph is the Folkman graph, which is clearly $4$-regular and bipartite. We refer to a vertex of the original $K_5$ and to its copy as \emph{twin} vertices. The remaining vertices of the Folkman, i.e., the ones obtained from the paths, are referred to as \emph{connector vertices}. We denote the five pairs of twin vertices by $\tw{a_1}$, $\tw{a_2}$, $\tw{b_1}$, $\tw{b_2}$, $\tw{c_1}$, $\tw{c_2}$, $\tw{d_1}$, $\tw{d_2}$, $\tw{e_1}$, $\tw{e_2}$, and the ten connector vertices by $\cn{ab}$, $\cn{ac}$, $\cn{ad}$, $\cn{ae}$, $\cn{bc}$, $\cn{bd}$, $\cn{be}$, $\cn{cd}$, $\cn{ce}$, $\cn{de}$; see Fig.~\ref{fig:folkman_3}. 
%For example, twin $\tw{a_1}$ is adjacent to four connectors via edges $(\tw{a_1}, \cn{ab})$, $(\tw{a_1}, \cn{ac})$, $(\tw{a_1}, \cn{ae})$, and $(\tw{a_1}, \cn{ad})$; see Fig.~\ref{fig:folkman_3}.% for an illustration.

\begin{figure}[t]
	\centering
	\subfloat[\label{fig:folkman_1}{}]
	{\includegraphics[page=1,width=0.22\textwidth]{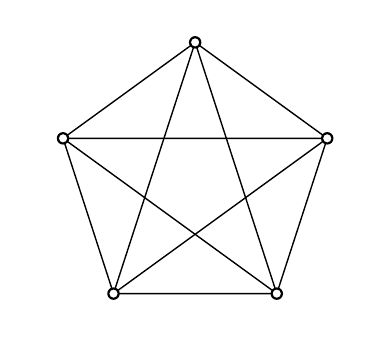}}
	\hfil
	\subfloat[\label{fig:folkman_2}{}]
	{\includegraphics[page=2,width=0.22\textwidth]{folkman}}
	\hfil
	\subfloat[\label{fig:folkman_3}{}]
	{\includegraphics[page=3,width=0.22\textwidth]{folkman}}
	\caption{Construction steps for the Folkman graph~\cite{FOLKMAN1967215}.}
	\label{fig:folkman}
\end{figure}
To prove that the dispersable book thickness of the Folkman graph is five, it suffices to prove that its dispersable book thickness cannot be four, and that it admits a dispersable book embedding with five pages. For the later, refer to Fig.~\ref{fig:folkman-dispersable} in Appendix~\ref{app:folkman}. For the former, we will assume for a contradiction that the Folkman graph admits a circular embedding with a $4$-edge-coloring, in which
\begin{inparaenum}[(i)]
\item no two edges of the same color cross, and
\item the graphs induced by the edges of the same color are $1$-regular.
\end{inparaenum}
Since by Property~(ii) adjacent edges must have different colors, we name them ``crossing'' such that we can use Property~(i) also for them. In the drawings, we use red, green, blue, and orange to indicate the four colors of the edges; black is used for an unknown (or not yet specified) color; see, e.g., Fig.~\ref{fig:lm22}. For any subset of at least three twin or connector vertices of the Folkman graph, say $\tw{A_1}$, $\cn{ab}$ and $\tw{B_2}$, we denote the clockwise order in which they appear along the boundary of the circular embedding by $(\dots \tw{A_1} \dots \cn{ab} \dots \tw{B_2} \dots)$. Every two vertices, say $\cn{ab}$ and $\tw{A_1}$, form two intervals, $[\cn{ab}, \tw{A_1}]$ and $[\tw{A_1}, \cn{ab}]$, in the clockwise order that correspond to the two arcs on the circle.

% ============================================================
\myparagraph{Useful lemmas:}
% ============================================================
In the following, we investigate properties of a dispersable book embedding with four pages of the Folkman graph. We start with a property that was first observed by Overbay~\cite{Over98} and later reproved by Hoske~\cite{Hos12}.

\begin{lemma}[Overbay~\cite{Over98}]
\label{lm:alternate}
For any regular bipartite graph, the vertices from both partitions are alternating in a dispersable order.
\end{lemma}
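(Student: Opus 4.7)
The plan is to combine two simple observations: first, that in a dispersable book embedding of a $k$-regular graph using exactly $k=\Delta(G)$ pages, each page must be a perfect matching; and second, that in any non-crossing perfect matching on an even number of points placed cyclically on a circle, every edge joins two points whose cyclic positions have opposite parity.

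For the first observation I would note that each page is $1$-regular, so every vertex is incident to at most one edge per page; since every vertex of $G$ has exactly $k$ incident edges and there are only $k$ pages, each vertex must contribute exactly one edge to every page, and so every page carries a perfect matching. For the second I would index the vertices $1,2,\dots,n$ cyclically along the spine and consider an arbitrary edge $(i,j)$ of a non-crossing perfect matching on a single page. One of the two arcs determined by $(i,j)$ contains $j-i-1$ interior vertices; these must be perfectly matched among themselves on the same page (otherwise some matching edge would cross $(i,j)$), so $j-i-1$ is even and $i,j$ have opposite parity.

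Applying the second observation to each of the $k$ pages, I would conclude that every edge of $G$ joins an odd-position vertex to an even-position vertex. Hence $V_{\text{odd}}$ and $V_{\text{even}}$ form a valid bipartition of $G$, and since a connected bipartite graph has a unique bipartition up to swapping parts, $\{V_{\text{odd}},V_{\text{even}}\}=\{A,B\}$; therefore $A$ and $B$ alternate along the spine, as claimed.

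The main obstacle is really only the cyclic parity step; once it is in hand, the rest is bookkeeping. A minor subtlety worth flagging is that for a disconnected regular bipartite graph the argument yields alternation only within each connected component (since different components could be assigned inconsistent ``phases''), but this does not affect the applications in the paper, since the Folkman graph in Section~\ref{sec:4-regular} and the Gray graph in Section~\ref{sec:3-regular} are both connected.
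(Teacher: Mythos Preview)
The paper does not actually supply its own proof of this lemma: it is stated with attribution to Overbay~\cite{Over98} (and Hoske~\cite{Hos12}) and then used as a black box, so there is nothing in the paper to compare your argument against directly.

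That said, your argument is correct and is essentially the standard one. The two ingredients you isolate---(i) that with $\Delta(G)$ pages each page is forced to be a \emph{perfect} matching, and (ii) the parity argument showing that a non-crossing perfect matching on cyclically ordered points always pairs positions of opposite parity---are exactly what is needed, and your derivation of (ii) via ``the arc cut off by an edge must be internally matched, hence of even length'' is clean. Your final step, identifying the odd/even position classes with the two bipartition classes via uniqueness of the bipartition in the connected case, is also fine, and you correctly flag that for disconnected graphs the statement as phrased can fail (different components could sit on the spine with inconsistent phases); since the Folkman and Gray graphs are connected this has no bearing on the paper's applications.
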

 
For the Folkman graph, Lemma~\ref{lm:alternate} implies that twin and connector vertices are alternating, i.e., for every pair of twins $\tw{a}$ and $\tw{b}$, interval $[\tw{a}, \tw{b}]$ contains a connector $\cn{x}$, that is, the order is always $(\dots \tw{a} \dots \cn{x} \dots \tw{b} \dots \cn{y} \dots)$. The next lemma strengthen the claim by describing an interval between twins $\tw{a_1}$ and $\tw{a_2}$.

\begin{lemma}
\label{lm:22}
Let $\tw{a_1}$, $\tw{a_2}$ be a pair of twins and $[\tw{a_1}, \tw{a_2}]$, $[\tw{a_2}, \tw{a_1}]$ be two intervals defined by the twins. Then one of the following holds:
\begin{itemize}[-]
\item \label{c:22:1} one of the intervals contains exactly one connector vertex corresponding to the twins, and another one contains all other connectors, that is, the order is $(\dots \tw{a_1}~\cn{ax}~\tw{a_2} \dots \cn{ay} \dots \cn{au} \dots \cn{av} \dots)$;

\item \label{c:22:2} both intervals contain two connectors corresponding to the twins (and possibly other connectors), that is, the order is
$(\dots \tw{a_1} \dots \cn{ax} \dots \cn{ay} \dots \tw{a_2} \dots \cn{au} \dots \cn{av} \dots)$.
\end{itemize}
\end{lemma}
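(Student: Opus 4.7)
The plan is to rule out the remaining possibility---that all four common connectors of the twins (namely $\cn{ab},\cn{ac},\cn{ad},\cn{ae}$) lie in one interval---so that the only remaining distributions are the $\{1,3\}$- and $\{2,2\}$-splits stated in the lemma. Throughout I use that both $\tw{a_1}$ and $\tw{a_2}$ have degree equal to the number of available pages (four), hence each of the four colors appears exactly once at each twin; write $c_x^i$ for the color of the edge $\tw{a_i}\cn{ax}$ with $x\in\{b,c,d,e\}$ and $i\in\{1,2\}$. The matching property at $\cn{ax}$ gives $c_x^1\neq c_x^2$, while for each fixed $i$ the colors $c_b^i,c_c^i,c_d^i,c_e^i$ form a permutation of the four colors.

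For the core step, I would assume toward contradiction that all four common connectors lie in $[\tw{a_1},\tw{a_2}]$ and label them $w_1,w_2,w_3,w_4$ in the clockwise order obtained by traversing this interval from $\tw{a_1}$. For every pair $i<j$, the four points $\tw{a_1},w_i,w_j,\tw{a_2}$ occur on the circle in this cyclic order, so the chords $\tw{a_1}w_j$ and $\tw{a_2}w_i$ have alternating endpoints and therefore cross; by the paper's convention on adjacency and crossing this forces $c_{w_j}^1\neq c_{w_i}^2$. Specializing to $j=4$ yields $c_{w_4}^1\notin\{c_{w_1}^2,c_{w_2}^2,c_{w_3}^2\}$, and adjacency at $w_4$ adds $c_{w_4}^1\neq c_{w_4}^2$. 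Since $\{c_{w_1}^2,c_{w_2}^2,c_{w_3}^2,c_{w_4}^2\}$ already exhausts the four colors, $c_{w_4}^1$ has no admissible value---a contradiction.

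Applying the same argument with the two intervals swapped rules out the symmetric case in which all four common connectors lie in $[\tw{a_2},\tw{a_1}]$. Consequently the distribution of the four common connectors between the two intervals is either $\{1,3\}$ or $\{2,2\}$, which, after, if necessary, relabeling the intervals so that the single connector (or the first two) falls in $[\tw{a_1},\tw{a_2}]$, matches precisely the two bullets of the lemma. The main subtlety is spotting the right pair of crossing chords: once one observes that $\tw{a_1}w_j$ and $\tw{a_2}w_i$ cross whenever $i<j$ inside the common interval, the pigeonhole conflict at $w_4$ follows immediately from the fact that $\tw{a_2}$ uses all four colors on its four edges to these connectors.
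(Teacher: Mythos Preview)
Your argument correctly rules out the $\{4,0\}$ distribution, and that part matches the first paragraph of the paper's proof almost exactly. However, you have misread the first bullet of the lemma. The order written there is $(\dots \tw{a_1}~\cn{ax}~\tw{a_2} \dots)$, with no dots between $\tw{a_1}$, $\cn{ax}$, $\tw{a_2}$: in the paper's notation this means those three vertices are \emph{consecutive} on the circle. In other words, the $\{1,3\}$ case of the lemma asserts not merely that one interval contains exactly one of $\tw{a}$'s connectors, but that this interval contains \emph{nothing else}---the twins are adjacent with just $\cn{ax}$ between them. (This is exactly what the paper calls ``close'' twins immediately after the lemma.)

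So the lemma is ruling out two configurations, not one: the $\{4,0\}$ split, and the $\{1,3\}$ split in which the short interval also contains other (non-$\tw{a}$) connectors and hence, by alternation, other twin vertices. Your proof handles only the first. The bulk of the paper's proof is devoted to the second: assuming the short interval $[\tw{a_1},\tw{a_2}]$ contains $\cn{ax}$ together with at least one foreign twin, it observes that the eight $\tw{a}$-edges are then forced into a unique coloring pattern and derives a contradiction by a case split on whether the foreign twins in $[\tw{a_1},\tw{a_2}]$ are a single $\tw{b_1}$, a pair $\tw{b_1},\tw{b_2}$, or contain two different twins $\tw{b_1},\tw{c_1}$. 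You need an argument of this kind to finish; the pigeonhole at $w_4$ does not address it.
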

\begin{proof}
The twins $\tw{a_1}$ and $\tw{a_2}$ have four connectors, $\cn{ax}$, $\cn{ay}$, $\cn{au}$, and $\cn{av}$. Let us first show that it is impossible for one interval to contain all four connectors. Assume for a contradiction that interval $[\tw{a_1}, \tw{a_2}]$ contains all the connectors; the order is $(\dots \tw{a_1} \dots \cn{ax} \dots \cn{ay} \dots  \cn{au} \dots \cn{av} \dots \tw{a_2} \dots )$. It is easy to see that five edges, $(\tw{a_1}, \cn{ax})$, $(\tw{a_1}, \cn{ay})$, $(\tw{a_1}, \cn{au})$, $(\tw{a_1}, \cn{av})$, $(\tw{a_2}, \cn{ax})$, are pairwise crossing. Thus, they need five distinct colors, which is impossible in a dispersable order of the Folkman graph.

\begin{figure}[t]
	\centering
	\subfloat[\label{fig:lm22:0}{}]{
	\includegraphics[page=1,width=0.19\textwidth]{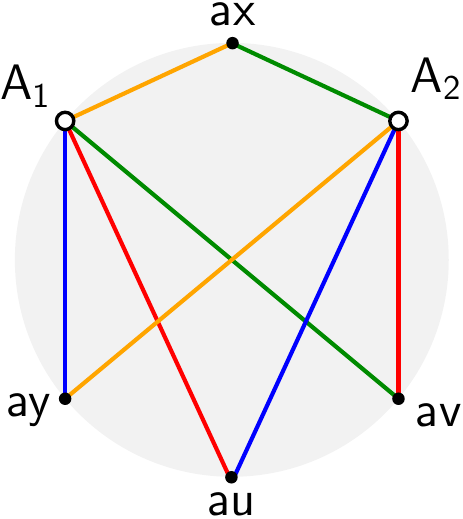}}
	\hfill
	\subfloat[\label{fig:lm22:1}{}]{
	\includegraphics[page=2,width=0.19\textwidth]{lm22}}
	\hfill
	\subfloat[\label{fig:lm22:2}{}]{
	\includegraphics[page=3,width=0.19\textwidth]{lm22}}
	\hfill
	\subfloat[\label{fig:lm22:3}{}]{
	\includegraphics[page=4,width=0.19\textwidth]{lm22}}
	\caption{Illustration for the proof of Lemma~\ref{lm:22}.
	%The four edge-colors are orange, blue, green, and red.
	%Black dashed edges indicate a contradiction.
	}
	\label{fig:lm22}
\end{figure}

To complete the proof, we find a contradiction for the case in which $[\tw{a_1}, \tw{a_2}]$ contains only one $\tw{a}$'s connector and some other connectors not adjacent to $\tw{a_1}$ and $\tw{a_2}$. By Lemma~\ref{lm:alternate}, $[\tw{a_1}, \tw{a_2}]$ also contains other twin vertices. Denote $\tw{a}$'s connector in $[\tw{a_1}, \tw{a_2}]$ by $\cn{ax}$ and other three connectors by $\cn{ay}$, $\cn{au}$, $\cn{av}$. Notice that the coloring of the eight $\tw{A}$'s edges is unique (up to color shift); see Fig.~\ref{fig:lm22:0}. We distinguish the three cases depending on the twins lying in $[\tw{a_1}, \tw{a_2}]$:
\begin{itemize}[-]
\item Interval $[\tw{a_1}, \tw{a_2}]$ contains exactly one twin vertex, say $\tw{B_1}$, that is, the order is $(\dots \tw{a_1} \cdot \tw{b_1} \cdot \tw{a_2} \dots)$ with two intermediate connectors (by Lemma~\ref{lm:alternate}). It is easy to see that connector $\cn{ab} \notin [\tw{a_2}, \tw{a_1}]$, as otherwise edge $(\tw{b_1}, \cn{ab})$ cannot be colored. Hence, $\cn{ab} \in [\tw{a_1}, \tw{a_2}]$; see Fig.~\ref{fig:lm22:1}. Let~$\cn{r}$ be the second connector in $[\tw{a_1}, \tw{a_2}]$. Then, the green edge adjacent to $\tw{b_1}$ must be $(\tw{b_1}, \cn{r})$, and therefore edge $(\tw{b_2}, \cn{r})$ cannot be colored; a contradiction.

\item Interval $[\tw{a_1}, \tw{a_2}]$ contains exactly two same twins, say $\tw{B_1},\tw{B_2}$, that is, the order is $(\dots \tw{a_1} \cdot \tw{b_1} \cdot \tw{b_2} \cdot \tw{a_2} \dots)$ with three intermediate connectors (by Lemma~\ref{lm:alternate}); see Fig.~\ref{fig:lm22:2}. If there is a connector $\cn{r}$ of $\tw{B}$ in the interval $[\tw{a_2}, \tw{a_1}]$, such that $\cn{r} \neq \cn{ab}$, then both edges $(\tw{b_1},\cn{r})$ and $(\tw{b_2},\cn{r})$ cannot be colored without forming five edges that pairwise cross, which is a contradiction. We claim that there is such connector. 
If $\cn{ab} \in [\tw{a_2}, \tw{a_1}]$, then, since $\cn{ab} \neq \cn{ax}$, at most two connectors of $\tw{b}$ can be on the interval $[\tw{a_1}, \tw{a_2}]$, which implies besides $\cn{ab}$ there is one more connector of $\tw{b}$ in $[\tw{a_2}, \tw{a_1}]$, as desired. Otherwise, $\cn{ab} \in [\tw{a_1}, \tw{a_2}]$. In this case, at most three connectors of $\tw{b}$ can be on the interval $[\tw{a_1}, \tw{a_2}]$, which implies there is connector $\cn{r}$ of $\tw{b}$ in $[\tw{a_2}, \tw{a_1}]$, such that $\cn{r} \neq \cn{ab}$.

\item Interval $[\tw{a_1}, \tw{a_2}]$ contains two or more different twins, say $\tw{B_1},\tw{C_1}$, that is, the order is $(\dots \tw{a_1} \dots \tw{b_1} \dots \tw{c_1} \dots \tw{a_2} \dots)$; see Fig.~\ref{fig:lm22:3}. One of the connectors $\cn{ab}$ and $\cn{ac}$ is in $[\tw{a_2}, \tw{a_1}]$, since we assumed that $\tw{a_1}$ and $\tw{a_2}$ have only one connector on the opposite interval $[\tw{a_1}, \tw{a_2}]$. W.l.o.g.\ assume connector $\cn{ac}\in[\tw{a_2}, \tw{a_1}]$. Then, the edge $(\tw{c_1}, \cn{ac})$ cannot be colored; a contradiction.\qed
\end{itemize}
\end{proof}

Denote the number of $\tw{a}$'s connectors in $[x, y]$ by~$\delta_{\tw{a}}(x, y)$. Lemma~\ref{lm:22} defines two possible configurations for a pair of twins, $\tw{a_1}$ and $\tw{a_2}$. The first one, which we call \emph{1-3 configuration}, is when $\delta_{\tw{a}}(\tw{a_1}, \tw{a_2}) = 1$ and $\delta_{\tw{a}}(\tw{a_2}, \tw{a_1}) = 3$, that is, the first interval contains one connector and another interval contains three connectors. In that case, the twins have to lie next to each other in the order (that is, there is no other twins in between); we call such twins \emph{close}. In the second configuration, called \emph{2-2 configuration}, $\delta_{\tw{a}}(\tw{a_1}, \tw{a_2}) = \delta_{\tw{a}}(\tw{a_2}, \tw{a_1}) = 2$ holds. Here, the twins are called \emph{far} (as there is at least one other twin in between).

%Can we say something about two pairs of twins in a dispersable order? 
\noindent The next two lemmas describe properties of pairs of twins based on whether~they alternate along the spine (\emph{crossing twin-pairs}) or not (\emph{non-crossing twin-pairs}).

\begin{lemma}[non-crossing twin pairs]
\label{lm:noncross}
Let $\tw{\tw{a_1}}$, $\tw{\tw{a_2}}$ and $\tw{b_1}$, $\tw{b_2}$ be two pairs of \emph{non-crossing twins}, that is, the order is $(\dots \tw{a_1} \dots \tw{a_2} \dots \tw{b_1} \dots \tw{b_2} \dots)$. For connector $\cn{ab}$, one of the following holds:

\begin{enumerate}[i.]
\item \label{c:noncross-1} $\cn{ab}$ is in $[\tw{a_2}, \tw{b_1}]$, that is, $(\dots \tw{a_1} \dots \tw{a_2} \dots \cn{ab} \dots \tw{b_1} \dots \tw{b_2} \dots)$;

\item \label{c:noncross-2} $\cn{ab}$ is in $[\tw{b_2}, \tw{a_1}]$, that is, $(\dots \tw{a_1} \dots \tw{a_2} \dots \tw{b_1} \dots \tw{b_2} \dots \cn{ab} \dots)$;

\item \label{c:noncross-3} $\tw{a_1}$ and $\tw{a_2}$ are close twins and the four twins are separated by $\tw{a}$'s connectors, that is, the order is $(\dots \tw{a_1}~\cn{ab}~\tw{a_2} \dots \cn{ax} \dots \tw{b_1} \dots \cn{ay} \dots \tw{b_2} \dots \cn{az} \dots)$;

\item \label{c:noncross-4} $\tw{b_1}$ and $\tw{b_2}$ are close twins and the four twins are separated by $\tw{b}$'s connectors, that is, the order is $(\dots \tw{b_1}~\cn{ab}~\tw{b_2} \dots \cn{bx} \dots \tw{a_1} \dots \cn{by} \dots \tw{a_2} \dots \cn{bz} \dots)$.
\end{enumerate}
\end{lemma}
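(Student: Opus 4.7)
The plan is to fix the clockwise order $(\dots \tw{a_1} \dots \tw{a_2} \dots \tw{b_1} \dots \tw{b_2} \dots)$ and do casework on which of the four arcs $I_1=[\tw{a_1},\tw{a_2}]$, $I_2=[\tw{a_2},\tw{b_1}]$, $I_3=[\tw{b_1},\tw{b_2}]$, $I_4=[\tw{b_2},\tw{a_1}]$ contains the connector $\cn{ab}$. If $\cn{ab}\in I_2$ then case (i) of the statement holds directly; if $\cn{ab}\in I_4$ then case (ii) holds directly. The two remaining placements $\cn{ab}\in I_1$ and $\cn{ab}\in I_3$ are symmetric under swapping the labels $\tw{a}\leftrightarrow\tw{b}$, so it suffices to treat $\cn{ab}\in I_1$ and derive case (iii).

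Assuming $\cn{ab}\in I_1$, I would first apply Lemma~\ref{lm:22} to the pair $\tw{a_1},\tw{a_2}$ and rule out the 2-2 configuration. In the 2-2 configuration, $I_1$ would contain $\cn{ab}$ and a second $\tw{a}$-connector $\cn{aw}$, and by Lemma~\ref{lm:alternate} at least one further twin sits between them inside $I_1$. The idea is then to concentrate on the small set of chords incident to $\{\tw{a_1},\tw{a_2},\cn{ab},\cn{aw}\}$, together with the chords $(\cn{ab},\tw{b_1})$ and $(\cn{ab},\tw{b_2})$ whose far endpoints lie outside $I_1$. A chord-interleaving analysis, in the spirit of the proof of Lemma~\ref{lm:22}, should expose a set of five pairwise adjacent-or-crossing edges, exceeding the four available colors. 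Hence only the 1-3 configuration remains: $\cn{ab}$ is the unique $\tw{a}$-connector in $I_1$, the twins $\tw{a_1},\tw{a_2}$ are \emph{close}, and the order begins with $\tw{a_1}, \cn{ab}, \tw{a_2}$.

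It then remains to show that the remaining three $\tw{a}$-connectors $\cn{ax},\cn{ay},\cn{az}$ distribute one per arc among $I_2,I_3,I_4$. All three lie in $[\tw{a_2},\tw{a_1}]=I_2\cup I_3\cup I_4$, so if the split is not $1$-$1$-$1$ then some arc contains at least two of them while another contains none. I would invoke Lemma~\ref{lm:22} on each of the partner pairs $\tw{x_1},\tw{x_2}$, $\tw{y_1},\tw{y_2}$, $\tw{z_1},\tw{z_2}$ to restrict where their twins can sit, and then use the long chords $(\cn{ab},\tw{b_1})$ and $(\cn{ab},\tw{b_2})$ as propagators of additional crossings. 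In each sub-case this again forces a quintuple of pairwise conflicting edges, yielding a contradiction and therefore the claimed one-per-arc distribution between $\tw{b_1}$ and $\tw{b_2}$.

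The principal obstacle will be the uniform bookkeeping of these five-edge conflicts: in both the excluded 2-2 configuration and each non-uniform distribution sub-case, the precise conflict witnesses depend on where the partner twins $\tw{w_1},\tw{w_2}$, $\tw{x_1},\tw{x_2}$, $\tw{y_1},\tw{y_2}$, $\tw{z_1},\tw{z_2}$ can lie. The delicate part is organizing the case split so that one does not have to enumerate an unmanageable number of positions for these partner twins before the contradiction is reached; symmetries of the Folkman graph and the already-established Lemmas~\ref{lm:alternate} and~\ref{lm:22} should be exploited to collapse many sub-cases into a small number of representative witnesses.
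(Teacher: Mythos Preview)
Your high-level decomposition matches the paper exactly: reduce by symmetry to $\cn{ab}\in[\tw{a_1},\tw{a_2}]$, rule out the 2-2 configuration for $\tw{a_1},\tw{a_2}$, then show the three remaining $\tw{a}$-connectors land one in each of $[\tw{a_2},\tw{b_1}]$, $[\tw{b_1},\tw{b_2}]$, $[\tw{b_2},\tw{a_1}]$.

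Where you diverge is in the implementation of both steps, and this is precisely the ``unmanageable bookkeeping'' you anticipate. You propose to drag in the partner twins $\tw{w_1},\tw{w_2},\tw{x_1},\tw{x_2},\ldots$ and invoke Lemma~\ref{lm:22} on each of them. The paper never touches these vertices. Its single pivot is the observation that $\cn{ab}$ is adjacent to all four of $\tw{a_1},\tw{a_2},\tw{b_1},\tw{b_2}$, so the four edges $(\cn{ab},\tw{a_1})$, $(\cn{ab},\tw{a_2})$, $(\cn{ab},\tw{b_1})$, $(\cn{ab},\tw{b_2})$ already exhaust all four colors.

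With this in hand, the 2-2 exclusion is one line: if a second $\tw{a}$-connector $\cn{ax}$ also lies in $[\tw{a_1},\tw{a_2}]$, then whichever of $(\tw{a_1},\cn{ax})$, $(\tw{a_2},\cn{ax})$ has $\cn{ab}$ between its endpoints crosses three of the $\cn{ab}$-edges and is adjacent to the fourth, so it has no available color. For the 1-1-1 distribution the paper does a short three-case split on which of the three sub-intervals of $[\tw{a_2},\tw{a_1}]$ contains two $\tw{a}$-connectors; in each case a specific edge, namely $(\tw{a_2},\cn{ab})$ or $(\tw{a_1},\cn{ab})$, is shown to conflict with four already-placed edges, using only vertices in $\{\tw{a_1},\tw{a_2},\tw{b_1},\tw{b_2},\cn{ab},\cn{ax},\cn{ay},\cn{az}\}$. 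No partner twins, no applications of Lemma~\ref{lm:22} to other pairs, no case explosion.
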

\begin{proof}
If $\cn{ab}$ is in $[\tw{a_2}, \tw{b_1}]$ or $[\tw{b_2}, \tw{a_1}]$, then the lemma holds. So, let $\cn{ab}$ be in $[\tw{a_1},\tw{a_2}]$. If $\tw{a_1}$ and $\tw{a_2}$ are far, then by Lemma~\ref{lm:22} there is~another of $\tw{a}$'s connectors, say $\cn{ax}$, in $[\tw{a_1},\tw{a_2}]$. One of the edges $(\tw{a_1}, \cn{ax})$, $(\tw{a_2}, \cn{ax})$ cannot be colored; see Fig.~\ref{fig:lmaabb1}. Thus, $\cn{ab}$ is the only of $\tw{a}$'s connector on $[\tw{a_1},\tw{a_2}]$, and $\tw{a_1},\tw{a_2}$ are close.

\begin{figure}[t]
	\centering
	\subfloat[\label{fig:lmaabb1}{}]{
	\includegraphics[page=1, width=0.19\textwidth]{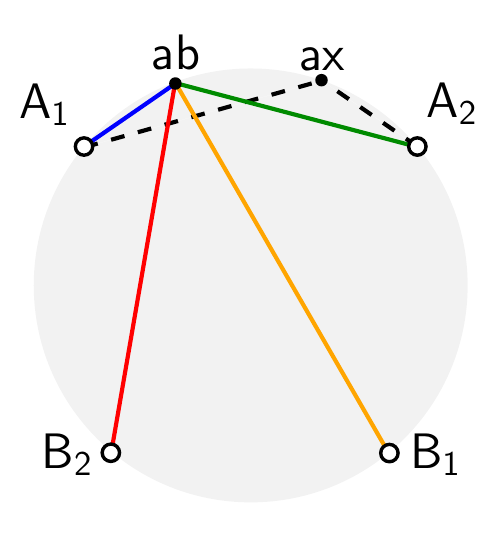}}
	\hfil
	\subfloat[\label{fig:lmaabb2}{}]{
	\includegraphics[page=3, width=0.19\textwidth]{lmnoncross}}
	\hfil
	\subfloat[\label{fig:lmaabb3}{}]{
	\includegraphics[page=2, width=0.19\textwidth]{lmnoncross}}
	\hfil
	\subfloat[\label{fig:lmaabb4}{}]{
	\includegraphics[page=4, width=0.19\textwidth]{lmnoncross}}
	\caption{Illustration for the proof of Lemma~\ref{lm:noncross}. 
	%Black dashed edges indicate a contradiction.
	}
	\label{fig:lmnoncross_1}
\end{figure}

Twins $\tw{b_1}$ and $\tw{b_2}$ define three sub-intervals on $[\tw{a_2}, \tw{a_1}]$. If two of $\tw{a}$'s connectors, say $\cn{ax}$ and $\cn{ay}$, belong to the leftmost sub-interval, then $(\tw{a_2}, \cn{ax})$, $(\tw{a_2},\cn{ay})$, $(\tw{b_1}, \cn{ab})$ and $(\tw{b_2}, \cn{ab})$ pairwise cross, which implies that all must have different colors; see Fig.~\ref{fig:lmaabb2}. Thus, $(\tw{a_2}, \cn{ab})$ needs a fifth color; a contradiction. Symmetric is the case, in which two of $\tw{a}$'s connectors belong to the rightmost sub-interval. Finally, if two of $\tw{a}$'s connectors, say $\cn{ax}$ and $\cn{ay}$, are on the central sub-interval, then by symmetry we may assume that the fourth of $\tw{a}$'s connectors, say $\cn{az}$, belongs either to $[\tw{a_2}, \tw{B_1}]$ or to $[\tw{b_1}, \tw{b_2}]$. In both cases, edge $(\tw{b_2}, \cn{ab})$ crosses $(\tw{a_1}, \cn{ax})$, $(\tw{a_1}, \cn{ay})$ and $(\tw{a_1}, \cn{az})$, which implies that all must have different colors; see Figs.~\ref{fig:lmaabb3} and~\ref{fig:lmaabb4}. Thus, $(\tw{a_1}, \cn{ab})$ needs a fifth color; a contradiction. We conclude that the three intervals contain one connector each, as in the claim.\qed
\end{proof}

%\noindent A simple corollary of Lemma~\ref{lm:noncross} is the following.

\begin{corollary}
\label{lm:close}
Let the order be $(\dots \tw{a_1}~\cn{ab}~\tw{a_2} \dots)$. Then $\tw{b_1}$ and $\tw{b_2}$ are far twins.
\end{corollary}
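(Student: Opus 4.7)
The plan is to argue by contradiction: suppose $\tw{b_1}$ and $\tw{b_2}$ are close, and show that none of the four configurations of Lemma~\ref{lm:noncross} is consistent with the hypothesis $(\dots \tw{a_1}~\cn{ab}~\tw{a_2} \dots)$. First I would record that the hypothesis already makes $\tw{a_1}$ and $\tw{a_2}$ close twins, since only the connector $\cn{ab}$ lies between them. With both pairs being close, each pair has its two members adjacent along the spine, so the two pairs cannot interleave; hence the two pairs are non-crossing in the sense of Lemma~\ref{lm:noncross}, which therefore applies.

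Next I would rule out the four possibilities one by one. Cases~\ref{c:noncross-1} and \ref{c:noncross-2} place $\cn{ab}$ in $[\tw{a_2}, \tw{b_1}]$ or in $[\tw{b_2}, \tw{a_1}]$, but the hypothesis puts $\cn{ab}$ strictly inside $[\tw{a_1}, \tw{a_2}]$; these cases are excluded immediately. Case~\ref{c:noncross-4} forces $\cn{ab}$ to lie between $\tw{b_1}$ and $\tw{b_2}$, which again contradicts the placement of $\cn{ab}$ in the hypothesis. The remaining case~\ref{c:noncross-3} produces the order $(\tw{a_1}~\cn{ab}~\tw{a_2} \dots \cn{ax} \dots \tw{b_1} \dots \cn{ay} \dots \tw{b_2} \dots \cn{az} \dots)$. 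Because $\tw{b_1}$ and $\tw{b_2}$ are assumed close, the alternation guaranteed by Lemma~\ref{lm:alternate} forces $\cn{ay}$ to be the \emph{unique} vertex in $[\tw{b_1}, \tw{b_2}]$. But $\cn{ay}$ is an $\tw{a}$-connector, so $[\tw{b_1}, \tw{b_2}]$ contains no $\tw{b}$-connector, contradicting Lemma~\ref{lm:22}, which forces one of the two intervals defined by the close twins $\tw{b_1},\tw{b_2}$ to contain exactly one $\tw{b}$-connector.

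The main obstacle is purely bookkeeping: one has to verify that in case~\ref{c:noncross-3} the closeness of $\tw{b_1},\tw{b_2}$ really does isolate $\cn{ay}$ as the only vertex between them, and that no sub-configuration of Lemma~\ref{lm:noncross} is silently compatible with both pairs being close. Both checks follow directly from Lemmas~\ref{lm:alternate} and~\ref{lm:22}, so the argument should be short.
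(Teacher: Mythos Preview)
Your argument is correct and follows essentially the same route as the paper's proof. The paper is terser: it observes directly that the hypothesis forces case~\ref{c:noncross-3} of Lemma~\ref{lm:noncross} (the elimination of cases~\ref{c:noncross-1}, \ref{c:noncross-2}, \ref{c:noncross-4} is left implicit), and only then introduces the contradiction assumption that $\tw{b_1},\tw{b_2}$ are close, concluding via the 1--3 configuration exactly as you do with Lemma~\ref{lm:22}. Your more explicit case elimination and the placement of the contradiction assumption at the outset are stylistic rather than substantive differences.
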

\begin{proof}
By Lemma~\ref{lm:noncross}.\ref{c:noncross-3}, $\tw{b_1}$ and $\tw{b_2}$ are separated by $\tw{a}$'s connectors, that is, the order is $(\dots \tw{a_1}~\cn{ab}~\tw{a_2} \dots \cn{ax} \dots \tw{b_1} \dots \cn{ay} \dots \tw{b_2} \dots \cn{az} \dots)$. If $\tw{b_1}$ and $\tw{b_2}$ are close, then the twins are in 1-3 configuration. Thus, the only connector between them is a $\tw{b}$'s connector, a contradiction.\qed
\end{proof}

%\begin{corollary}
%\label{lm:tooclose}
%Two pairs $\tw{a_1},\tw{a_2}$ and $\tw{b_1},\tw{b_2}$ of close twins are not separated by their common connector $\cn{ab}$.
%\end{corollary}

\begin{lemma}[crossing twin pairs]
\label{lm:cross}
Let $\tw{\tw{a_1}}$, $\tw{\tw{a_2}}$ and $\tw{b_1}$, $\tw{b_2}$, be two pairs of crossing twins, that is, the order is $(\dots \tw{a_1} \dots \tw{b_1} \dots \tw{a_2} \dots \tw{b_2} \dots)$. Then one of the following holds:
\begin{enumerate}[i.]
\item \label{c:cross-1} $\delta_{\tw{a}}(\tw{a_1}, \tw{b_1}) = \delta_{\tw{a}}(\tw{b_1}, \tw{a_2}) = \delta_{\tw{a}}(\tw{a_2}, \tw{b_2}) = \delta_{\tw{a}}(\tw{b_2}, \tw{a_1}) = 1$;
\item \label{c:cross-2} $\delta_{\tw{a}}(\tw{a_1}, \tw{b_1}) = \delta_{\tw{a}}(\tw{a_2}, \tw{b_2}) = 2$ and $\delta_{\tw{a}}(\tw{b_1}, \tw{a_2}) = \delta_{\tw{a}}(\tw{b_2}, \tw{a_1}) = 0$;
\item \label{c:cross-3} $\delta_{\tw{a}}(\tw{a_1}, \tw{b_1}) = \delta_{\tw{a}}(\tw{a_2}, \tw{b_2}) = 0$ and $\delta_{\tw{a}}(\tw{b_1}, \tw{a_2}) = \delta_{\tw{a}}(\tw{b_2}, \tw{a_1}) = 2$.
\end{enumerate}
In particular, $\tw{a}$'s connectors are in the 2-2 configuration with respect to $\tw{b}$'s twins, that is, $\delta_{\tw{a}}(\tw{b_1}, \tw{b_2}) = \delta_{\tw{a}}(\tw{b_2}, \tw{b_1}) = 2$.
\end{lemma}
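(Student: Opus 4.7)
The plan proceeds in three phases. First, I would observe that both twin pairs must sit in the 2-2 configuration of Lemma~\ref{lm:22}. If $\tw{a_1}, \tw{a_2}$ were close, the short interval between them would contain only one connector and no other twin, contradicting the crossing hypothesis that places $\tw{b_1}$ in $[\tw{a_1}, \tw{a_2}]$; the symmetric argument applies to $\tw{b_1}, \tw{b_2}$. Setting $p = \delta_{\tw{a}}(\tw{a_1}, \tw{b_1})$, $q = \delta_{\tw{a}}(\tw{b_1}, \tw{a_2})$, $r = \delta_{\tw{a}}(\tw{a_2}, \tw{b_2})$, $s = \delta_{\tw{a}}(\tw{b_2}, \tw{a_1})$ gives $p + q = r + s = 2$ and leaves nine candidate tuples $(p,q,r,s)$. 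The three tuples listed in the statement are exactly those with the additional property $q + r = 2$, and the concluding equality $\delta_{\tw{a}}(\tw{b_1}, \tw{b_2}) = q + r = 2$ is then immediate.

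The heart of the argument is ruling out the six remaining tuples. The main device is that the four edges $(\tw{a_1}, \cn{ab})$, $(\tw{a_2}, \cn{ab})$, $(\tw{b_1}, \cn{ab})$, $(\tw{b_2}, \cn{ab})$ are pairwise incident at $\cn{ab}$ and therefore exhaust all four colors, say $A, B, C, D$; any further edge that is incident to one of them while crossing the other three would need a fifth color, giving a contradiction. Using the symmetries that simultaneously swap $\tw{a_1}\leftrightarrow\tw{a_2}$ with $\tw{b_1}\leftrightarrow\tw{b_2}$ and that reverse spine orientation, the six forbidden tuples collapse into two representative orbits, e.g.\ $(2,0,0,2)$ and $(1,1,2,0)$.

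For $(2,0,0,2)$ with $\cn{ab} \in [\tw{a_1}, \tw{b_1}]$, let $\cn{ax}$ be the second $\tw{a}$-connector in that arc. If $\cn{ab}$ precedes $\cn{ax}$ along the clockwise order, then $(\tw{a_1}, \cn{ax})$ is incident to $(\tw{a_1}, \cn{ab})$ at $\tw{a_1}$ and crosses each of $(\tw{a_2}, \cn{ab})$, $(\tw{b_1}, \cn{ab})$, $(\tw{b_2}, \cn{ab})$ by the standard chord-separation criterion, yielding the desired 4-color clash. In the dual sub-order (with $\cn{ax}$ before $\cn{ab}$), I pivot to the four edges at $\cn{ax}$: $(\tw{a_1}, \cn{ax})$ is incident to $(\tw{a_1}, \cn{ab})$, while $(\tw{a_2}, \cn{ax})$, $(\tw{x_1}, \cn{ax})$, $(\tw{x_2}, \cn{ax})$ (where $\tw{x}\in\{\tw{c},\tw{d},\tw{e}\}$ is the partner label carried by $\cn{ax}$) all cross $(\tw{a_1}, \cn{ab})$ in the generic case, so all four edges at $\cn{ax}$ are forced to avoid color $A$ and the four distinct colors at $\cn{ax}$ cannot fit into $\{B,C,D\}$. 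The representative $(1,1,2,0)$ is dispatched analogously, either by pairing $\cn{ab}$ with a second $\tw{a}$-connector now sitting in the $r$-interval, or by using a $\tw{b}$-connector whose presence in $[\tw{a_1}, \tw{b_1}]$ is forced by Lemma~\ref{lm:22} applied to $\tw{b_1}, \tw{b_2}$.

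The most delicate step is the dual sub-order above, specifically the corner case in which the twin forced by Lemma~\ref{lm:alternate} to lie strictly between $\cn{ax}$ and $\cn{ab}$ is itself one of $\tw{x_1}, \tw{x_2}$; then one of the four edges at $\cn{ax}$ no longer crosses $(\tw{a_1}, \cn{ab})$ and the naive clash degenerates. I would handle this by invoking Lemma~\ref{lm:22} for $\tw{x_1}, \tw{x_2}$ (noting that the only $\tw{x}$-connector adjacent to the inside twin in the cyclic order is $\cn{ax}$) to pin down the location of the remaining $\tw{x}$-twin and $\tw{x}$-connectors, thereby restoring the missing crossing. Making this bookkeeping uniform over all sizes of $[\tw{a_1}, \tw{b_1}]$ and all placements of $\cn{ab}$ is, I expect, the main technical burden of the argument.
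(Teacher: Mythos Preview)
Your case reduction through Lemma~\ref{lm:22} (forcing $p+q=r+s=2$) and the dihedral symmetries down to the two representatives $(2,0,0,2)$ and $(1,1,2,0)$ is correct, and in fact tidier than the paper's decomposition, which carries a separate $\delta_{\tw a}(I)\ge 3$ bucket and names only $(2,0,0,2)$ and $(0,2,2,0)$ explicitly. For the $(2,0,0,2)$ sub-case with $\cn{ax}$ preceding $\cn{ab}$ in $[\tw{a_1},\tw{b_1}]$, though, your pivot to the four edges at $\cn{ax}$ is an avoidable detour: writing $\cn{ay},\cn{az}$ for the two $\tw a$-connectors in $[\tw{b_2},\tw{a_1}]$, the five edges $(\tw{a_2},\cn{ax})$, $(\tw{a_2},\cn{ab})$, $(\tw{a_2},\cn{ay})$, $(\tw{a_2},\cn{az})$, $(\tw{b_2},\cn{ab})$ already form a pairwise-conflicting set, because the chord $(\tw{b_2},\cn{ab})$ separates $\tw{a_2}$ from each of $\cn{ax},\cn{ay},\cn{az}$. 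The corner case you worry about never arises.

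The real gap is $(1,1,2,0)$. Write the $\tw a$-connectors as $\alpha\in[\tw{a_1},\tw{b_1}]$, $\beta\in[\tw{b_1},\tw{a_2}]$, $\gamma,\delta\in[\tw{a_2},\tw{b_2}]$ in clockwise order. Only $\cn{ab}=\alpha$ yields an immediate five-clique (take $(\tw{b_2},\alpha)$ against the four edges at $\tw{a_1}$). For $\cn{ab}\in\{\beta,\gamma,\delta\}$ the one-shot device fails: with $\cn{ab}=\beta$, for instance, no neighbour of $\tw{a_1},\tw{a_2},\tw{b_1},\tw{b_2}$ is forced to lie in the short arcs $(\tw{b_1},\beta)$ or $(\beta,\tw{a_2})$, and one checks that both $(\tw{b_1},\beta)$ and $(\tw{b_2},\beta)$ receive legal colours once the eight $\{\tw{a_1},\tw{a_2}\}$--$\{\alpha,\beta,\gamma,\delta\}$ edges are coloured. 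What is actually needed is the forced-colouring technique the paper uses for its $\delta_{\tw a}(I)=3$ bucket: observe that the colouring of those eight edges is unique up to permutation, insert $(\tw{b_1},\cn{ab})$ and $(\tw{b_2},\cn{ab})$, and then bring in the second $\tw b$-connector in $[\tw{b_1},\tw{b_2}]$ (located via Lemma~\ref{lm:22} applied to $\tw b$), splitting on its position. So ``dispatched analogously'' is too optimistic; this orbit carries the bulk of the work.
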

\begin{proof}
Besides the three cases described in the lemma, we will exclude all the remaining cases, which are as follows:
\begin{itemize}[-]
\item There is an interval $I$ which is either $[\tw{a_1}, \tw{b_1}]$, or $[\tw{b_1}, \tw{a_2}]$ or $[\tw{a_2}, \tw{b_2}]$ or $[ \tw{b_2}, \tw{a_1}]$ with $\delta_{\tw{a}}(I) \geq 3$,
\item $\delta_{\tw{a}}(\tw{a_1}, \tw{b_1}) = \delta_{\tw{a}}(\tw{b_2}, \tw{a_1}) = 2$ and $\delta_{\tw{a}}(\tw{B_1}, \tw{a_2}) = \delta_{\tw{a}}(\tw{a_2}, \tw{b_2}) = 0$, and
\item $\delta_{\tw{a}}(\tw{b_1}, \tw{a_2}) = \delta_{\tw{a}}(\tw{a_2}, \tw{b_2}) = 2$ and $\delta_{\tw{a}}(\tw{a_1}, \tw{b_1}) = \delta_{\tw{a}}(\tw{b_2}, \tw{a_1}) = 0$.
\end{itemize}

We start with the first case. Note that by Lemma~\ref{lm:22}, $\delta_{\tw{a}}(I)$ cannot be $4$. Let w.l.o.g.\ $I=[\tw{a_1}, \tw{b_1}]$ and assume for a contradiction that $\delta_{\tw{a}}(I) = 3 $. Then, by symmetry, we may assume $\delta_{\tw{a}}(\tw{a_2}, \tw{b_2}) + \delta_{\tw{a}}(\tw{b_2}, \tw{a_1}) = 1$. It is easy to see that the three colors of the edges from $\tw{a_2}$ to the three of the $\tw{a}$'s connectors in $[\tw{a_1}, \tw{b_1}]$, uniquely determine the four colors for the edges from $\tw{a_1}$ to the $\tw{a}$'s connectors; see e.g., Fig.~\ref{fig:lmabab1}. Then, we check the four possible locations for connector $\cn{ab}$. For each case, we try to insert the two edges from $\cn{ab}$ to $\tw{b_1}$ and $\tw{b_2}$, and immediately achieve a contradiction; see Figs.~\ref{fig:lmabab1}-\ref{fig:lmabab4} for an illustration.

\begin{figure}[t]
	\centering
	\subfloat[\label{fig:lmabab1}{}]{
	\includegraphics[page=1, width=0.19\textwidth]{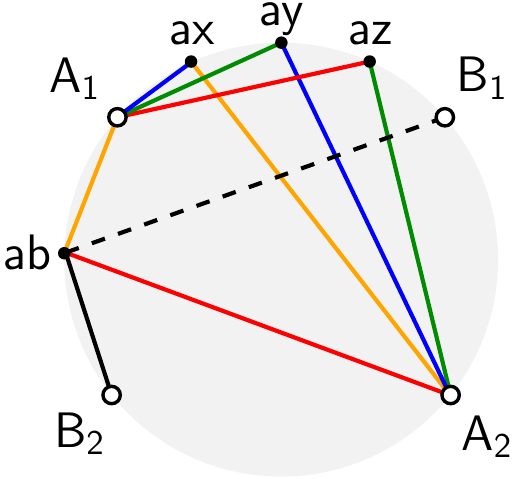}}
	\hfill
	\subfloat[\label{fig:lmabab2}{}]{
	\includegraphics[page=2, width=0.19\textwidth]{lmncross}}
	\hfill
	\subfloat[\label{fig:lmabab3}{}]{
	\includegraphics[page=3, width=0.19\textwidth]{lmncross}}
	\hfill
	\subfloat[\label{fig:lmabab4}{}]{
	\includegraphics[page=4, width=0.19\textwidth]{lmncross}}
	\hfill
	\subfloat[\label{fig:lmabab5}{}]{
	\includegraphics[page=5, width=0.19\textwidth]{lmncross}}
	\caption{Illustration for the proof of Lemma~\ref{lm:cross}.
	% Black dashed edges indicate a contradiction.
	}
	\label{fig:lmnoncross_2}
\end{figure}

For the second case, assume to the contrary that $\delta_{\tw{a}}(\tw{a_1}, \tw{b_1}) = \delta_{\tw{a}}(\tw{b_2}, \tw{a_1}) = 2$ which directly implies that $\delta_{\tw{a}}(\tw{b_1}, \tw{a_2}) = \delta_{\tw{a}}(\tw{a_2}, \tw{b_2}) = 0$. By symmetry, we may assume that connector $\cn{ab}$ is in $[\tw{a_1}, \tw{b_1}]$, and that it appears before the second of $\tw{a}$'s connectors, say $\cn{ax}$, in $[\tw{a_1}, \tw{b_1}]$, when moving along $[\tw{a_1}, \tw{b_1}]$ from $\tw{a_1}$ to $\tw{b_1}$; see Fig.~\ref{fig:lmabab5}. Since the edges from connector $\cn{ab}$ towards $\tw{a_1}$, $\tw{a_2}$, $\tw{b_1}$ and $\tw{b_2}$ use all colors, edge $(\tw{a_1},\cn{ax})$ cannot be colored; a contradiction. As the third case is symmetric to the second, the lemma follows.\qed
\end{proof}

\myparagraph{Case analysis:} We have now introduced the tools we need, and we proceed to we analyse several \emph{forbidden patterns}, that is, subsequences of twins, that cannot occur in a dispersable order of the Folkman graph.

\begin{figure}[t]
	\centering
	\subfloat[\label{fig:case3b1}{}]{
	\includegraphics[page=1, width=0.19\textwidth]{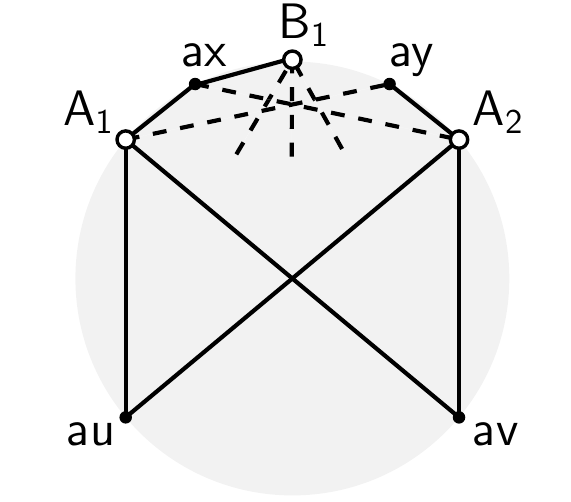}}
	\hfil
	\subfloat[\label{fig:case4a1}{}]{
	\includegraphics[page=2, width=0.19\textwidth]{case34}}
	\hfil
	\subfloat[\label{fig:case4a2}{}]{
	\includegraphics[page=3, width=0.19\textwidth]{case34}}
	\caption{Illustration for the proof of
	(a)~Forbidden Pattern~\ref{case3}, 
	(b)-(c)~Forbidden Pattern~\ref{case4a}.
	%Black dashed edges indicate a contradiction.
	}
	\label{fig:case34}
\end{figure}

\begin{pattern}[$\boldmath \dots \tw{a_1} \cdot \tw{b_1} \cdot \tw{a_2} \dots$]
\label{case3}
Between any twin pair, there is not exactly one single twin vertex.
\end{pattern}
\begin{proof}
Assume for a contradiction that there exists a pair of twins, $\tw{a_1},\tw{a_2}$, with exactly one twin vertex, $\tw{b_1}$, between them in $[\tw{a_1},\tw{a_2}]$. By Lemma~\ref{lm:22}, there are two connectors of $\tw{a}$ in $[\tw{a_1}, \tw{a_2}]$, call them $\cn{ax}$ and $\cn{ay}$, and two connectors of $\tw{a}$ in $[\tw{a_2}, \tw{a_1}]$, called them $\cn{au}$ and $\cn{av}$; see Fig.~\ref{fig:case3b1}. Twin $\tw{b_1}$ has four adjacent connectors, and only one of them can be a connector of $\tw{a}$, namely $\cn{ab}$. Hence, the three edges to the remaining connectors cross both $(\tw{A_1}, \cn{ay})$ and $(\tw{A_2}, \cn{ax})$; a contradiction.\qed
\end{proof}
\begin{pattern}[$\boldmath \dots \tw{a_1} \cdot \tw{b_1} \cdot \tw{b_2} \cdot \tw{a_2} \dots$]
\label{case4a}
Between any twin pair, there are not exactly two same twin vertices.
\end{pattern}
\begin{proof}
Assume to the contrary that there exists a pair of twins, $\tw{a_1},\tw{a_2}$, with exactly two same twin vertices, $\tw{b_1}, \tw{b_2}$, between them in $[\tw{a_1},\tw{a_2}]$. By Lemma~\ref{lm:alternate}, we assume that the order is $(\dots \tw{a_1}~\cn{x}~\tw{b_1}~\cn{y}~\tw{b_2}~\cn{z}~\tw{a_2} \dots)$, where $\cn{x},\cn{y},\cn{z}$ are connectors. By Lemma~\ref{lm:22}, two of them are connectors of $\tw{A}$, including $\cn{ab}$. If connector $\cn{ab}$ were $\cn{y}$, then by Lemma~\ref{lm:noncross}.\ref{c:noncross-4} both $\cn{x}$ and $\cn{z}$ would have been connectors of $\tw{b}$, contrading the fact that two of $\cn{x},\cn{y},\cn{z}$ are~connectors of $\tw{A}$. Hence, $\cn{ab}$ is not $\cn{y}$. It follows that there exist two $\tw{b}$'s connectors in $[\tw{a_2}, \tw{a_1}]$, call them $\cn{bu}$ and $\cn{bv}$; see Fig.~\ref{fig:case4a1}. Now, it is easy to see that edges $(\tw{b_1}, \cn{bv})$, $(\tw{b_2}, \cn{bu})$, $(\tw{b_2}, \cn{bv})$, $(\tw{a_1}, \cn{z})$, and $(\tw{a_2}, \cn{x})$ pairwise cross; see Fig.~\ref{fig:case4a2}. So, they need five colors; a contradiction.\qed
\end{proof}
\begin{pattern}[$\boldmath \dots \tw{a_1} \cdot \tw{b_1} \cdot \tw{c_1} \cdot \tw{a_2} \dots$]
\label{case4b}
Between any twin pair, there are not exactly two different twin vertices
\end{pattern}
\begin{proof}
Assume for a contradiction that there exists a pair of twins, $\tw{a_1},\tw{a_2}$, with exactly two different twin vertices, $\tw{b_1}, \tw{c_1}$, between them in $[\tw{a_1},\tw{a_2}]$. By Lemma~\ref{lm:alternate}, we assume that the order is $(\dots \tw{a_1}~\cn{x}~\tw{b_1}~\cn{y}~\tw{c_1}~\cn{z}~\tw{a_2} \dots)$, where $\cn{x},\cn{y},\cn{z}$ are connectors. By Lemma~\ref{lm:22}, twins $\tw{a_1}$ and $\tw{a_2}$ have two connectors in $[\tw{a_1}, \tw{a_2}]$ and two connectors in $[\tw{a_2}, \tw{a_1}]$. By Lemma~\ref{lm:noncross} applied first for twins $\tw{a}$ and $\tw{d}$, and then for $\tw{a}$ and $\tw{e}$, we conclude that $\cn{ad}$ and $\cn{ae}$ are on $[\tw{a_2}, \tw{a_1}]$, while $\cn{ab}$ and $\cn{ac}$ are on $[\tw{a_1}, \tw{a_2}]$. By symmetry, we may assume that  $\cn{ae}$ appears before $\cn{ad}$ on $[\tw{a_2}, \tw{a_1}]$.

\begin{figure}[t]
	\centering
	\subfloat[\label{fig:case4b1}{}]{
	\includegraphics[page=1, width=0.19\textwidth]{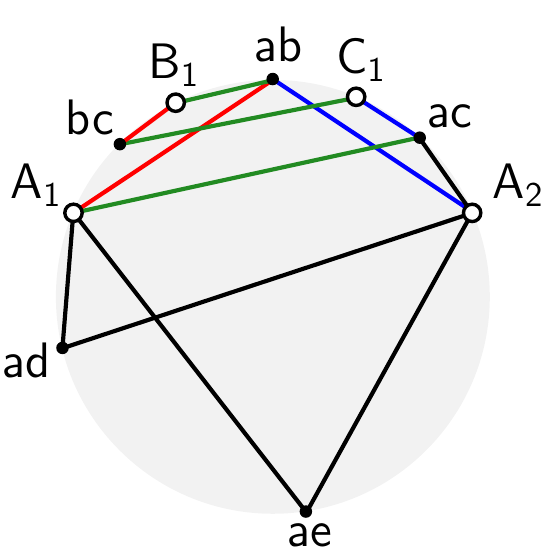}}
	\hfil
	\subfloat[\label{fig:case4b2}{}]{
	\includegraphics[page=2, width=0.19\textwidth]{case4}}
	\hfil
	\subfloat[\label{fig:case4b3}{}]{
	\includegraphics[page=3, width=0.19\textwidth]{case4}}
	\caption{Illustration of the case~$\cn{bc}=\cn{x} \in [\tw{a_1}, \tw{b_1}]$.
	%Black indicate unknown (not yet determined) edge color.
	}
\end{figure}

Since $\tw{a}$'s and $\tw{b}$'s twins cross, by Lemma~\ref{lm:cross} we obtain that $\delta_{\tw{b}}(\tw{a_1}, \tw{a_2}) = 2$ and there exist two $\tw{b}$'s connectors in $[\tw{a_1}, \tw{a_2}]$. Symmetrically, there are two $\tw{c}$'s connectors in $[\tw{a_1}, \tw{a_2}]$. Hence, connector $\cn{bc}$ is in $[\tw{a_1}, \tw{a_2}]$. By symmetry, we consider two cases: either $\cn{bc}=\cn{x} \in [\tw{a_1}, \tw{b_1}]$ or $\cn{bc}=\cn{y} \in [\tw{b_1}, \tw{c_1}]$.
\begin{itemize}[-]
\item Assume first that $\cn{bc}=\cn{x} \in [\tw{a_1}, \tw{b_1}]$. Since there should exist an edge $(\tw{b_1}, \cn{y})$ with the same color as $(\tw{c_1}, \cn{bc})$, it follows that $\cn{y} = \cn{ab}$ and $\cn{z} = \cn{ac}$; see Fig.~\ref{fig:case4b1}, in which we have also assumed a coloring started w.l.o.g.\ with green $(\tw{a_1}, \cn{ac})$, red $(\tw{a_1}, \cn{ab})$, and blue $(\tw{a_2}, \cn{ab})$.
Since $\delta_{\tw{a}}(\tw{b_1}, \tw{a_2}) = 2$, by Lemma~\ref{lm:cross}.\ref{c:cross-3} applied for $\tw{a}$ and $\tw{b}$, we obtain that $\delta_{\tw{a}}(\tw{a_2}, \tw{b_2}) = 0$. So, $\tw{b_2} \in [\tw{a_2}, \cn{ae}]$. Since $\delta_{\tw{a}}(\tw{a_1}, \tw{c_1}) = \delta_{\tw{a}}(\tw{c_1}, \tw{a_2}) = 1$, by Lemma~\ref{lm:cross}.\ref{c:cross-1} applied for $\tw{a}$ and $\tw{c}$, we have that~$\tw{c_2} \in [\cn{ae}, \cn{ad}]$. Next we color the edges; see Fig.~\ref{fig:case4b2}. Edge $(\tw{b_2}, \cn{ab})$ is orange (the  remaining color for $\cn{ab}$), and thus, $(\tw{c_2}, \cn{bc})$ and $(\tw{a_2}, \cn{ac})$ are orange (otherwise two orange edges cross), while $(\tw{b_2}, \cn{bc})$ is blue and $(\tw{c_2}, \cn{ac})$ is red. Hence, $(\tw{a_2}, \cn{ad})$ is green, $(\tw{a_1}, \cn{ae})$ is blue, $(\tw{a_1}, \cn{ad})$ is orange, and $(\tw{a_2}, \cn{ae})$~is~red.

Since $\tw{d_1}$ and $\tw{d_2}$ must be connected to $\cn{ad}$ via blue and red edges, $\tw{d_1}$ and $\tw{d_2}$ are in $[\cn{ae}, \tw{a_1}]$. The orange edge from $\tw{c_1}$ must end at a connector $\cn{cw} \notin \{\cn{ac},\cn{bc}\}$ in $[\tw{a_2},\tw{b_2}]$, and $\cn{cx}$ must have a green edge to $\tw{c_2}$. Since $\cn{cw}$ cannot be connected to $[\cn{ae}, \tw{a_1}]$, where $\tw{d_1}$ and $\tw{d_2}$ reside (as it would cross blue edge $(\tw{a_1}, \cn{ae})$ and red edge $(\tw{a_2}, \cn{ae})$), $\cn{cw} \neq \cn{cd}$ holds. Thus $\cn{cw} = \cn{ce} \in [\tw{a_2},\tw{b_2}]$.

Since both $\tw{e_1}$ and $\tw{e_2}$ have to be connected to $\cn{ce}$  via blue and red edges,~$\tw{e_1}$ and $\tw{e_2}$ are in $[\tw{a_2},\cn{ae}]$; see Fig.~\ref{fig:case4b3}. In particular, one of them is in $[\tw{a_2}, \tw{b_2}]$ (because it must be connected to $\cn{ce}$ with a blue edge), while the other one is in $[\tw{b_2}, \cn{ae}]$ (because it must be connected to $\cn{ae}$ with an orange edge). Since both $\tw{d_1}$ and $\tw{d_2}$ are in $[\cn{ae}, \tw{a_1}]$, we conclude that $\tw{e_1}, \tw{b_2}, \tw{e_2}$ form Forbidden Pattern~\ref{case3} $(\dots \tw{e_1} \cdot \tw{b_2} \cdot \tw{e_2} \dots)$, which is not possible.

\item Assume now that $\cn{bc}=\cn{y} \in [\tw{b_1}, \tw{c_1}]$. Note that $\cn{x} \neq \cn{ac}$, as otherwise $\tw{b_1}$ needs to have two edges in $[\cn{ac}, \tw{c_1}]$ (one with the color of $(\cn{ac}, \tw{a_2})$ and one with the color of $(\cn{ac}, \tw{c_1})$), which is impossible; see Fig.~\ref{fig:case4b4}. Since $\cn{ab},\cn{ac} \in [\tw{a_1}, \tw{a_2}]$, it follows that $\cn{x} = \cn{ab}$ and $\cn{z} = \cn{ac}$; see Fig.~\ref{fig:case4b5} where we have also assumed a coloring started w.l.o.g.\  with green $(\tw{a_1}, \cn{ab})$, red $(\tw{a_1}, \cn{ac})$, and blue $(\tw{a_2}, \cn{ab})$.

\begin{figure}[t]
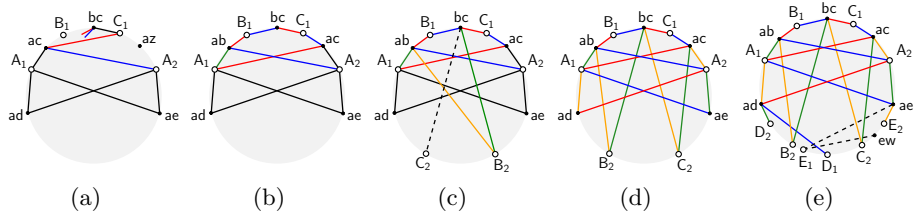

	\centering
	\subfloat[\label{fig:case4b4}{}]{
	\includegraphics[page=4,width=0.19\textwidth]{case4}}
	\hfil
	\subfloat[\label{fig:case4b5}{}]{
	\includegraphics[page=5,width=0.19\textwidth]{case4}}
	\hfil
	\subfloat[\label{fig:case4b6}{}]{
	\includegraphics[page=6,width=0.19\textwidth]{case4}}
	\hfil
	\subfloat[\label{fig:case4b7}{}]{
	\includegraphics[page=7,width=0.19\textwidth]{case4}}
	\hfil
	\subfloat[\label{fig:case4b8}{}]{
	\includegraphics[page=8,width=0.19\textwidth]{case4}}
	\caption{Illustration of the case~$\cn{bc}=\cn{y} \in [\tw{a_1}, \tw{b_1}]$.
	%Black indicate unknown (not yet determined) edge color.
	}
\end{figure}

What is the placement of $\tw{b_2}$ and $\tw{c_2}$? Applying Lemma~\ref{lm:cross}.\ref{c:cross-1} for $\tw{b}$ and $\tw{a}$ and then for $\tw{c}$ and $\tw{a}$, we conclude that both $\tw{b_2}$ and $\tw{c_2}$ are in $[\cn{ae}, \cn{ad}]$. Edge $(\cn{ab}, \tw{b_2})$ is either orange or green; assume w.l.o.g.\ orange, and thus $(\cn{bc}, \tw{b_2})$ is green. If $\tw{b_2}$ appears before $\tw{c_2}$ in $[\cn{ae}, \cn{ad}]$, then the orange edge $(\cn{ab}, \tw{b_2})$ must cross the edge $(\cn{bc}, \tw{c_2})$; see Fig.~\ref{fig:case4b6}. Hence, $(\cn{bc}, \tw{c_2})$ needs a fifth color; contradiction. So, in the following we will assume that the order in $[\cn{ae}, \cn{ad}]$ is $(\dots \cn{ae} \dots \tw{c_2} \dots \tw{b_2} \dots \cn{ad} \dots)$. This order also fixes the colors of the following edges: $(\tw{C_2}, \cn{bc})$ is orange, $(\tw{c_2}, \cn{ac})$ is green, $(\tw{a_2}, \cn{ac})$ is orange, $(\tw{a_2}, \cn{ad})$ is red, $(\tw{a_1}, \cn{ae})$ is blue, $(\tw{a_1}, \cn{ad})$ is orange, $(\tw{a_2}, \cn{ae})$ is green; see Fig.~\ref{fig:case4b7}.

Since $\cn{bc} \in [\tw{b_1},\tw{c_1}]$, by applying Lemma~\ref{lm:noncross}.\ref{c:noncross-1} twice, it follows that there exists a $\tw{b}$'s and a $\tw{c}$'s connector in $[\tw{c_2}, \tw{b_2}]$. Thus, by Lemma~\ref{lm:alternate}, there is also a twin vertex in $[\tw{c_2}, \tw{b_2}]$. Up to renaming, we may assume that this twin is $\tw{d_1}$; observe that $\tw{d_1}$ is connected to $\cn{ad}$ via blue edge. By Lemma~\ref{lm:noncross}.\ref{c:noncross-3} and~\ref{c:noncross-4} applied for $\tw{d}$ and $\tw{a}$, we conclude that the second twin $\tw{d_2}$ cannot be in $[\cn{ad}, \tw{a_1}]$, because $\tw{b_2} \in [\tw{d_1}, \cn{ad}]$. Hence, by Lemma~\ref{lm:noncross}.\ref{c:noncross-1} and~\ref{c:noncross-2}, $\tw{d_2}$ is in $[\tw{a_2}, \cn{ad}]$. Since $\tw{d_2}$ is connected to $\cn{ad}$ via green edge, we can further constraint the placement of $\tw{d_2}$ in $[\tw{b_2}, \cn{ad}]$.

Since by Forbidden Pattern~\ref{case3} the order $(\ldots \tw{d_1} \cdot \tw{b_2} \cdot \tw{d_2} \ldots)$ is not possible, one of $\tw{e}$'s twins is in $[\tw{d_1}, \tw{d_2}]$, say w.l.o.g.\ $\tw{E_1}$. By Lemma~\ref{lm:noncross}.\ref{c:noncross-3} and~\ref{c:noncross-4} applied for $\tw{e}$ and $\tw{a}$, the second twin $\tw{e_2}$ cannot be in $[\tw{a_1}, \cn{ae}]$, because $\tw{b_2} \in [\cn{ae},\tw{E_2}]$. Hence, $\tw{e_2}$ is in $[\cn{ae},\tw{a_1}]$. Now, observe that $\cn{ae}$ must be connected to either $\tw{e_1}$ or $\tw{e_2}$ with an orange edge. This connection cannot be with $\tw{e_1}$, as othewrwise the orange edge $(\cn{ae},\tw{E_1})$ would cross $(\cn{bc},\tw{c_1})$, which is also orange; a contradiction. It follows that $\cn{ae}$ is connected to $\tw{e_2}$ with an orange edge, which further constraints the placement of $\tw{e_2}$ to $[\tw{a_2},\tw{c_2}]$; see Fig.~\ref{fig:case4b8}. We assume that $\tw{e_2}$ belongs to $[\cn{ae},\tw{c_2}]$; the case in which $\tw{e_2}$ belongs to  $[\tw{a_2},\cn{ae}]$ is similar.

By Lemma~\ref{lm:alternate}, there is a connector in $[\tw{e_2}, \tw{c_2}]$. This connector cannot be a connector of $\tw{d}$, as otherwise both of its connections with $\tw{d_1}$ and $\tw{d_2}$ must be red; a contradiction. Similarly, we can argue that the connector in $[\tw{e_2}, \tw{c_2}]$ can neither be a connector of $\tw{b}$ nor a connector of $\tw{c}$.  Thus, the connector in $[\tw{e_2}, \tw{c_2}]$ is a $\tw{e}$'s connector, call it $\cn{ew}$. Since $\tw{e_1}$ is connected to both $\cn{ae}$ and $\cn{ew}$, the two edges $(\tw{e_1}, \cn{ae})$ and $(\tw{e_1}, \cn{ew})$ need to cross a blue $(\tw{d_1}, \cn{ad})$, a green $(\tw{c_2}, \cn{ac})$, and an orange $(\tw{c_2}, \cn{bc})$ edge, which is impossible.\qed
\end{itemize}
\end{proof}

\begin{pattern}[$\boldmath \dots \tw{a_1} \cdot \tw{b_1} \dots \tw{b_2} \cdot \tw{a_2} \dots$]
\label{caseX}
It is impossible to have a non-crossing pair of adjacent twins.
\end{pattern}
\begin{proof}
Assume to the contrary that $\tw{a_1},\tw{a_2}$ and $\tw{b_1},\tw{b_2}$ form a non-crossing pair of adjacent twins. Neither $\tw{a_1},\tw{a_2}$ nor $\tw{b_1},\tw{b_2}$ are close twins, as otherwise they would form Forbidden Pattern~\ref{case4a}. Thus, by Lemma~\ref{lm:noncross}.\ref{c:noncross-1} and~\ref{c:noncross-2}, $\cn{ab} \in [\tw{a_1}, \tw{b_1}] \cup [\tw{b_2}, \tw{a_2}]$; say w.l.o.g.\  $\cn{ab} \in [\tw{b_2}, \tw{a_2}]$. By Lemma~\ref{lm:alternate}, there is a connector in $[\tw{a_1},\tw{b_1}]$, which might be adjacent to one of $\tw{a_1}$ or $\tw{b_1}$ or not. We consider each case separately. 

Assume first that the connector in $[\tw{a_1},\tw{b_1}]$ is adjacent neither to $\tw{a_1}$ nor to $\tw{b_1}$; w.l.o.g.\  assume it is $\cn{de}$. Since $\tw{a}$ and $\tw{b}$ are in 2-2 configuration, by Lemma~\ref{lm:22}, there exists a $\tw{a}$'s connector, say $\cn{ax}$, in $[\tw{b_1}, \tw{b_2}]$ and a $\tw{b}$'s connector, say $\cn{by}$, in $[\tw{a_2}, \tw{a_1}]$; see Fig.~\ref{fig:caseX1} where we have also assumed a coloring started w.l.o.g.\  with red $(\cn{ab}, \tw{a_1})$, orange $(\cn{ab}, \tw{a_2})$, blue $(\cn{ab}, \tw{b_1})$, and green $(\cn{ab}, \tw{b_2})$. Then, edges $(\tw{a_2}, \cn{ax})$ and $(\tw{b_1}, \cn{by})$ have to be green, while $(\tw{b_2}, \cn{by})$ and $(\tw{a_1}, \cn{ax})$ orange.

\begin{figure}[t]
	\centering
	\subfloat[\label{fig:caseX1}{}]{
	\includegraphics[page=1, width=0.19\textwidth]{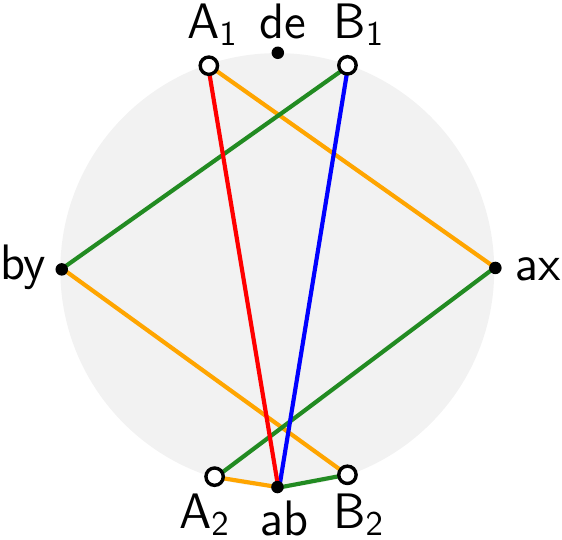}}
	\hfil
	\subfloat[\label{fig:caseX2}{}]{
	\includegraphics[page=2, width=0.19\textwidth]{caseX}}
	\caption{Illustrations for the proof of Forbidden Pattern~\ref{caseX}.
    %; black indicate unknown (not yet determined) edge color.
    }
\end{figure}

The two twins $\tw{x_1}$ and $\tw{x_2}$ adjacent to $\cn{ax}$ are in $[\tw{b_1}, \tw{b_2}]$, as they are connected to $\cn{ax}$ via red and blue edges. Similarly, twins $\tw{y_1}$ and $\tw{y_2}$  adjacent to $\cn{by}$ are in $[\tw{a_2}, \tw{a_1}]$. So, either $\cn{x}$ is $\cn{d}$ or $\cn{e}$, or $\cn{y}$ is $\cn{d}$ or $\cn{e}$. Further, we can conclude that the two connections from $\cn{de}$ to the interval $[\tw{b_1}, \tw{b_2}]$ go to the same twin pair, say $\tw{d_1},\tw{d_2}$, while the two connections to the interval $[\tw{a_2}, \tw{a_1}]$ go to twin pair $\tw{e_1},\tw{e_2}$. 

If both $\tw{c_1}$ and $\tw{c_2}$ were either in $[\tw{a_2}, \tw{a_1}]$ or in $[\tw{b_1}, \tw{b_2}]$, then $\tw{e_1}, \tw{e_2}$ would form pattern $(\ldots \tw{a_2} \cdot \tw{e_1} \cdot \tw{e_2} \cdot \tw{a_1}\ldots)$, which is forbidden by Forbidden Pattern~\ref{case4a}. Hence, we may assume w.l.o.g.\ that $\tw{c_1}$ is in $[\tw{a_2}, \tw{a_1}]$, while $\tw{c_2}$ in $[\tw{b_1}, \tw{b_2}]$. Now, observe that at most one of the edges incident to $\tw{c_1}$ might end in $[\tw{b_1}, \tw{b_2}]$, while its three remaining edges must end in $[\tw{a_2}, \tw{a_1}]$. A symmetric argument for $\tw{c_2}$ implies that at most one of its incident edges might end in the opposite interval $[\tw{a_2}, \tw{a_1}]$, while its three remaining edges must end in $[\tw{b_1}, \tw{b_2}]$. Since twins $\tw{c_1}$ and $\tw{c_2}$ share the same neighborhood, we have obtained a contradiction.

To complete the proof, we now consider the case in which the connector in $[\tw{a_1},\tw{b_1}]$ is adjacent to one of $\tw{a_1}$ or $\tw{b_1}$; assume w.l.o.g. that it is $\cn{ac}$ (recall that $\cn{ab} \in [\tw{b_2},\tw{a_2}]$). Again, $\tw{a}$ and $\tw{b}$ are in 2-2 configuration. Since connector $\cn{ab}$ is in $[\tw{B_2},\tw{B_1}]$, by Lemma~\ref{lm:22}, we may assume that $\tw{b}$ has one additional connector, say $\cn{bu}$, in $[\tw{B_2},\tw{B_1}]$, and two connectors, say $\cn{bv},\cn{bw}$, in $[\tw{B_1},\tw{B_2}]$. Since $\cn{ab} \in [\tw{B_2},\tw{A_2}]$, $\cn{ac} \in [\tw{A_1},\tw{B_2}]$, and since $\tw{a_1},\tw{a_2}$ and $\tw{b_1},\tw{b_2}$ form a non-crossing pair of adjacent twins, $\cn{bu} \in [\tw{A_2},\tw{A_1}]$. Fig.~\ref{fig:caseX2} illustrates the embedding with edge colors assigned as follows. W.l.o.g.\ $(\cn{ab}, \tw{a_1})$ is red, $(\cn{ab}, \tw{b_1})$ is blue, $(\tw{b_2}, \cn{bu})$ is orange, and $(\tw{b_1}, \cn{bu})$ is green. Then, $(\tw{a_2}, \cn{ac})$ is blue, $(\tw{b_2}, \cn{ab})$ is green, $(\tw{a_2}, \cn{ab})$ is orange. It follows that $(\tw{b_1}, \cn{bv})$ and $(\tw{b_2}, \cn{bw})$ are red, $(\tw{b_2}, \cn{bv})$ is blue, and $(\tw{b_1}, \cn{bw})$ is orange.

Next, we consider connector $\cn{bu}$ and, as in the previous case, we observe that both twin vertices $\tw{u_1}$ and $\tw{u_2}$ that are adjacent to $\cn{bu}$ must be on the interval $[\tw{a_2}, \tw{a_1}]$. If there were no other twin vertex on the interval $[\tw{a_2}, \tw{a_1}]$, then $\tw{u_1}$ and $\tw{u_2}$ would form pattern $(\ldots \tw{a_2} \cdot \tw{U_1} \cdot \tw{U_2} \cdot \tw{a_1}\ldots)$, which is forbidden by Forbidden Pattern~\ref{case4a}. Assume w.l.o.g.\ that twin vertex $\tw{v_1}$ is in $[\tw{a_2}, \tw{a_1}]$. We now claim that twin vertex $\tw{v_2}$ is also in $[\tw{b_1}, \tw{b_2}]$. To see this, first observe that, by Lemma~\ref{lm:alternate}, there exist at least one twin vertex in $[\tw{b_1}, \tw{b_2}]$, since both connectors $\cn{bv}$ and $\cn{bw}$ belong to this interval. Now, assume for a contradiction that twin vertex $\tw{v_2}$ is not in $[\tw{b_1}, \tw{b_2}]$. It follows that at most two twin vertices are in $[\tw{b_1}, \tw{b_2}]$, since $\tw{u_1},\tw{u_2},\tw{v_1},\tw{v_2} \in [\tw{a_2}, \tw{a_1}]$. If there is only one twin vertex in $[\tw{b_1}, \tw{b_2}]$, say $\tw{w_1}$, then pattern $(\ldots \tw{b_1} \cdot \tw{w_1} \cdot \tw{b_2}\ldots)$ is formed, which is forbidden by Forbidden Pattern~\ref{case3}. If there are two twin vertices, say $\tw{w_1}$ and $\tw{w_2}$, in $[\tw{b_1}, \tw{b_2}]$, then pattern $(\ldots \tw{b_1} \cdot \tw{w_1} \cdot \tw{w_2} \cdot \tw{b_2}\ldots)$ is formed, which is forbidden by Forbidden Pattern~\ref{case4a}. Hence, twin vertex $\tw{v_2}$ is in $[\tw{b_1}, \tw{b_2}]$, as claimed. Now, observe that out of the four edges incident to $\tw{v_1}$ at least three have to end in the interval $[\tw{a_2}, \tw{a_1}]$, where $\tw{v_1}$ resides. On the other hand, at most two edges incident to twin vertex $\tw{v_2}$ may end in the opposite interval $[\tw{a_2}, \tw{a_1}]$. Since twins $\tw{v_1}$ and $\tw{v_2}$ share the same neighbourhood, we have obtained a contradiction. This concludes the proof.\qed
\end{proof}

\begin{pattern}[$\boldmath \dots \tw{a_1} \cdot \tw{b_1}  \cdot \tw{c_1}\dots \tw{a_2}\cdot \tw{b_2} \cdot \tw{c_2} \dots$]
\label{ctriple}
It is impossible to have a \emph{crossing triple}, i.e., a triple of consecutive twins that pairwise cross.
\end{pattern}
\begin{proof}
Assume to the contrary that there exists a crossing triple, and let the order be $(\dots \tw{a_1}~\cn{x}~\tw{b_1}~\cn{y}~\tw{c_1} \dots \tw{a_2}~\cn{u}~\tw{b_2}~\cn{v}~\tw{c_2} \dots)$, where $\cn{x},\cn{y},\cn{u},\cn{v}$ are intermediate connectors. Observe that $\tw{a_1},\tw{a_2}$, $\tw{b_1},\tw{b_2}$, and $\tw{c_1},\tw{c_2}$ form three pairs of crossing twins. By Lemma~\ref{lm:cross}, the number of $\tw{b}$'s connectors on opposite intervals formed by a pair of crossing twins is the same. Thus, the number of $\tw{b}$'s connectors in $[\tw{a_1},\tw{c_1}]$ equals the number of $\tw{b}$'s connectors in $[\tw{a_2},\tw{c_2}]$, which implies that in the union of the two intervals there are in total either four, or two, or zero $\tw{b}$'s connectors (i.e., four, two, or zero out of $\cn{x},\cn{y},\cn{u},\cn{v}$ are $\tw{b}$'s connectors). We refer to the first and second case as \emph{non-zero crossing triple}, while to the third as \emph{zero crossing triple}. In Cases~(i) and~(ii) in the following we eliminate the case of non-zero crossing triples, and then assuming that there is no non-zero crossing triple we also eliminate the case of zero crossing triples. 
\begin{enumerate}[i.]
\item \emph{$\tw{b}$ has four connectors among $\cn{x},\cn{y},\cn{u},\cn{v}$}. By symmetry, we may assume $\cn{x} = \cn{ab}$. This implies that there is not a $\tw{a}$'s connector in $[\tw{a_2}, \tw{c_2}]$. By Lemma~\ref{lm:cross} for $\tw{a}$ and $\tw{c}$, however, it follows that there must exist a $\tw{a}$'s connector in $[\tw{a_2}, \tw{c_2}]$; a contradiction.

\item \emph{$\tw{b}$ has two connectors among $\cn{x},\cn{y},\cn{u},\cn{v}$}. 
By symmetry, we may assume that  the $\tw{b}$'s connectors are $\cn{x} \in [\tw{a_1},\tw{b_1}]$ and~$\cn{u} \in [\tw{a_2},\tw{b_2}]$. We will now prove by contradiction that $\cn{x} \notin \{\cn{ab},\cn{bc}\}$. Assume first that $\cn{x} = \cn{bc}$ and let w.l.o.g.\ the color of $(\tw{c_1},\cn{bc})$ be blue. Since $(\tw{c_1},\cn{bc})$ cannot be crossed by another blue edge, it follows that $(\tw{b_1}, \cn{y})$ exists and is blue. This, however, contradicts the fact that $\tw{B}$ has two connectors among $\cn{x},\cn{y},\cn{u},\cn{v}$. Assume now that $\cn{x} = \cn{ab}$. Since $\delta_{\tw{a}}(\tw{a_1}, \tw{b_1})=1$, it follows by Lemma~\ref{lm:cross}.\ref{c:cross-1} that $\delta_{\tw{a}}(\tw{a_2}, \tw{b_2})=1$. Hence, $\cn{u} \in [\tw{a_2}, \tw{b_2}]$ is a connector of $\tw{a}$. Since $\cn{ab} \in [\tw{a_1}, \tw{b_1}] $ and since by our initial assumption $\cn{u}$ is a connector of $\tw{b}$, we have again obtained a contradiction. It follows that either $\cn{x} = \cn{bd}$ or $\cn{x} = \cn{be}$ holds. By symmetry, either $\cn{u} = \cn{bd}$ or $\cn{u} = \cn{be}$ holds. By Lemma~\ref{lm:22}, there is a connector of $\tw{b}$ in each of $[\tw{c_1}, \tw{a_2}]$ and $[\tw{c_2}, \tw{a_2}]$. W.l.o.g.\ assume $\cn{ab}$ is in $[\tw{c_1},\tw{a_2}]$ and $\cn{bc}$ is in $[\tw{c_2},\tw{a_1}]$. It is easy now to see that the following edges pairwise cross: $(\tw{b_1},\cn{bc})$, $(\tw{c_1},\cn{bc})$, $(\tw{a_1},\cn{ab})$, $(\tw{b_1},\cn{ab})$, and $(\cn{u},\tw{b_1})$; a contradiction.

\item \emph{$\tw{b}$ has zero connectors among $\cn{x},\cn{y},\cn{u},\cn{v}$}. By~(i) and~(ii), it follows that no non-zero crossing triple exists. By Lemma~\ref{lm:cross}, two connectors of $\tw{b}$ exist in each of $[\tw{c_1},\tw{a_2}]$ and $[\tw{c_2},\tw{a_1}]$. Note that, $\cn{x}$ is not a connector of $\tw{c}$, as otherwise the four edges incident to $\tw{b_1}$ would cross $(\tw{c_1},\cn{x})$. By symmetry, $\cn{u}$ is not a connector of $\tw{c}$, and $\cn{y}$ and $\cn{v}$ are not connectors of $\tw{a}$. Also, $\cn{ac} \notin [\tw{a_1},\tw{c_1}] \cup [\tw{a_2},\tw{c_2}]$. 

Let $\delta(\tw{c_1},\tw{a_2})$ and $\delta(\tw{c_2},\tw{a_1})$ be the number of twins in $[\tw{c_1},\tw{a_2}]$ and $[\tw{c_2},\tw{a_1}]$, respectively. Clearly, $\delta(\tw{c_1},\tw{a_2})+\delta(\tw{c_2},\tw{a_1}) \le 4$ holds.  Since there exist two connectors of $\tw{b}$ in each of $[\tw{c_1},\tw{a_2}]$ and $[\tw{c_2},\tw{a_1}]$, there exist at least one twin in each of $[\tw{c_1},\tw{a_2}]$ and $[\tw{c_2},\tw{a_1}]$. Thus, $\delta(\tw{c_1},\tw{a_2}),\delta(\tw{c_2},\tw{a_1}) \ge 1$. Assume w.l.o.g.~that $\tw{d_1} \in [\tw{c_1},\tw{a_2}]$, and that $\tw{d_1}$ encountered first in $[\tw{c_1},\tw{a_2}]$.

The first twin encountered in $[\tw{c_2},\tw{a_1}]$ cannot be $\tw{d_2}$, as otherwise $\tw{b_1},\tw{c_1},\tw{d_1}$, and $\tw{b_2},\tw{c_2},\tw{d_2}$ would form a non-zero crossing triple containing connectors of $\tw{c}$. By symmetry, let $\tw{e_1}$ be the first twin in $[\tw{c_2},\tw{a_1}]$. 

We claim that $\delta(\tw{c_1},\tw{a_2}),\delta(\tw{c_2},\tw{a_1}) \leq 2$. For a contradiction, let $\delta(\tw{c_1},\tw{a_2})=3$ (the case $\delta(\tw{c_2},\tw{a_1})=3$ is symmetric). Then, $[\tw{c_1},\tw{a_2}]$ contains $\tw{d_1},\tw{d_2},\tw{e_2}$. If $\tw{d_2}$ precedes $\tw{e_2}$ in $[\tw{c_1},\tw{a_2}]$, then $\tw{e_1}, \tw{a_1},\tw{b_1}$, and $\tw{e_2},\tw{a_2},\tw{b_2}$ form a non-zero crossing triple containing connectors of $\tw{a}$. Otherwise, $\tw{d_2}$ follows $\tw{e_2}$ and thus $\tw{d_1}, \tw{e_2},\tw{d_2}$ form Forbidden Pattern~\ref{case3}. Hence, our claim holds.

Since $\tw{d_1} \in [\tw{c_1},\tw{a_2}]$,  $\tw{e_1} \in [\tw{c_2},\tw{a_1}]$ and $\delta(\tw{c_1},\tw{a_2}) \leq 2$, it follows that either $\tw{d_1},\tw{e_2} \in [\tw{c_1},\tw{a_2}]$ or $\tw{d_1},\tw{d_2} \in [\tw{c_1},\tw{a_2}]$ holds.  In the former case, $\tw{d}$ and $\tw{e}$ form Forbidden Pattern~\ref{caseX}. In the later case, $\tw{e_1},\tw{e_2} \in [\tw{c_2},\tw{a_1}]$, and the order is $(\tw{a_1} \cdot \tw{b_1} \cdot \tw{c_1} \cdot \tw{d_1} \cdot \tw{d_2} \cdot \tw{a_2} \cdot \tw{b_2} \cdot \tw{c_2} \cdot \tw{e_1} \cdot \tw{e_2} \cdot)$. Now, recall that $\cn{ac} \notin [\tw{a_1},\tw{c_1}] \cup [\tw{a_2},\tw{c_2}]$. By Lemma~\ref{lm:22}, it follows that $\cn{ac} \notin [\tw{d_1},\tw{d_2}]$ and $\cn{ac} \notin [\tw{e_1},\tw{e_2}]$. Hence, $\cn{ac}$ belongs to one of $[\tw{c_1},\tw{d_1}]$, $[\tw{d_2},\tw{a_2}]$, $[\tw{c_2},\tw{e_1}]$, $[\tw{e_2},\tw{a_1}]$. Assume the former; the remaining cases are similar. In this case,~$(\tw{c_1},\cn{ac})$ is crossed by the four edges incident to $\tw{b_1}$, which is not possible.\qed 
\end{enumerate}
\end{proof}

\begin{pattern}[$\boldmath \dots \tw{a_1} \cdot \tw{b_1} \dots \tw{a_2} \cdot \tw{b_2} \dots$]
\label{caseY}
It is impossible to have a crossing pair of adjacent twins.
\end{pattern}
\begin{proof}
Assume to the contrary that $\tw{a_1},\tw{a_2}$ and $\tw{b_1},\tw{b_2}$ is a crossing pair of adjacent twins. By Forbidden Patterns~\ref{case3} and~\ref{case4b}, there exist at least two twins in each of $[\tw{b_1},\tw{a_2}]$ and $[\tw{b_2},\tw{a_1}]$; call them $\tw{X},\tw{Y},\tw{U},\tw{V}$, and assume that the order is $(\dots \tw{x} \cdot \tw{a_1} \cdot \tw{b_1} \cdot \tw{y} \dots \tw{u} \cdot \tw{a_2} \cdot \tw{b_2} \cdot \tw{v} \dots)$. By Forbidden Pattern~\ref{case4b}, each of $\tw{x}, \tw{y}$ and $\tw{u}, \tw{v}$ are different twins. Let w.l.o.g.\ $\tw{x} = \tw{d_1}$, $\tw{y} = \tw{c_1}$. Since $\tw{u},\tw{v}$ are different, one of them, say w.l.o.g.~$\tw{u}$, is not a twin of $\tw{e}$. So, either $\tw{u}=\tw{c_2}$ or $\tw{u}=\tw{d_2}$ holds. If $\tw{u}=\tw{d_2}$, then the order is $(\dots \tw{d_1} \cdot \tw{a_1} \cdot \tw{b_1} \dots \tw{d_2} \cdot \tw{a_2} \cdot \tw{b_2} \dots)$, which implies that $\tw{d},\tw{a},\tw{b}$ form Forbidden Pattern~\ref{ctriple}. We conclude that $\tw{u}=\tw{c_2}$ holds; see Fig.~\ref{fig:caseYa}. 

Since the remaining twins are $\tw{d_2}$, $\tw{e_1}$, and $\tw{e_2}$, and since one of these is $\tw{v}$, there exist either zero, or one, or two twins in $[\tw{c_1},\tw{c_2}]$. One yields Forbidden Pattern~\ref{case3}, while two yields either Forbidden Pattern~\ref{case4a} or Forbidden Pattern~\ref{case4b}, depending on whether the two twins are same or not, respectively. Hence, we may assume that $\tw{c_1}$ and $\tw{c_2}$ are close twins.

Since $\tw{c_1}$ and $\tw{c_2}$ are close twins, twins $\tw{d_2}$, $\tw{e_1}$, and $\tw{e_2}$ are all in $[\tw{b_2}, \tw{d_1}]$. Hence, their relative order in $[\tw{b_2}, \tw{d_1}]$ is: \begin{inparaenum}[]
\item $(\tw{e_1} \cdot \tw{d_2} \cdot \tw{e_2})$, or
\item $(\tw{e_2} \cdot \tw{d_2} \cdot \tw{e_1})$, or
\item $(\tw{e_1} \cdot \tw{e_2} \cdot \tw{d_2})$, or
\item $(\tw{e_2} \cdot \tw{e_1} \cdot \tw{d_2})$, or
\item $(\tw{d_2} \cdot \tw{e_1} \cdot \tw{e_2})$, or
\item $(\tw{d_2} \cdot \tw{e_2} \cdot \tw{e_1})$.
\end{inparaenum}
The first two yield Forbidden Pattern~\ref{case3}. The next two yield Forbidden Pattern~\ref{case4a}. By symmetry of the last two cases we may assume that the order is $(\tw{a_1}~\cn{x}~\tw{b_1}~\cn{y}~\tw{c_1}~\cn{z}~\tw{c_2}~\cn{u}~\tw{a_2}~\cn{v}~\tw{b_2} \cdot \tw{e_1} \cdot \tw{e_2} \cdot \tw{d_2} \cdot \tw{d_1} \cdot)$, where $\cn{x},\cn{y},\cn{z},\cn{u},\cn{v}$ are intermediate connectors; see Fig.~\ref{fig:caseYb}.

\begin{figure}[t]
	\centering
	\subfloat[\label{fig:caseYa}{}]{
	\includegraphics[page=1, width=0.19\textwidth]{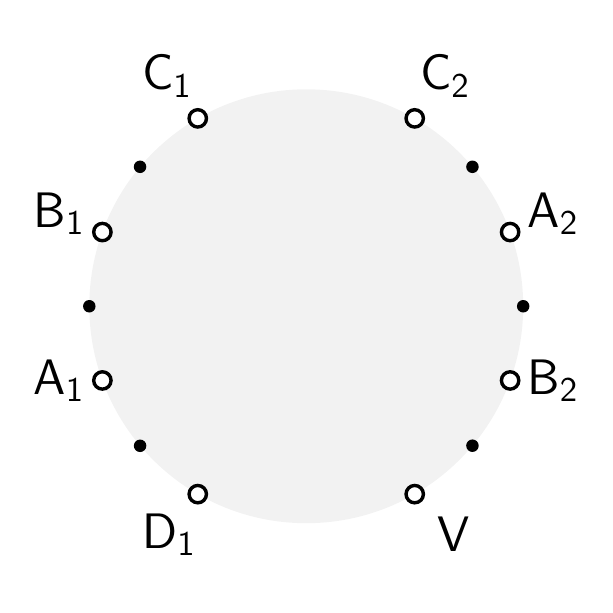}}
	\hfil
	\subfloat[\label{fig:caseYb}{}]{
	\includegraphics[page=2, width=0.19\textwidth]{caseY}}
	\caption{Illustration for the proof of Forbidden Pattern~\ref{caseY}}
\end{figure}

Since $\tw{c_1},\tw{c_2}$ and $\tw{d_1},\tw{d_2}$ are both close twins, by Corollary~\ref{lm:close}, it follows that connector $\cn{z}$, which is in $[\tw{c_1},\tw{c_2}]$, is not $\cn{cd}$. A symmetric argument on $\tw{c_1},\tw{c_2}$ and $\tw{e_1},\tw{e_2}$ implies that $\cn{z}$ is not $\cn{ce}$. By Lemma~\ref{lm:22}, $\cn{z}$ is either $\cn{ac}$ or $\cn{bc}$. By symmetry, we may assume $\cn{z}=\cn{ac}$. Since $\cn{ac} \in [\tw{c_1},\tw{c_2}]$, by Lemma~\ref{lm:noncross}.\ref{c:noncross-3} and~\ref{c:noncross-4} applied for $\tw{c}$ and $\tw{a}$, it follows that there exists a connector of $\tw{c}$ at each of the intervals $[\tw{c_1},\tw{c_2}]$, $[\tw{c_2},\tw{a_2}]$, $[\tw{a_2},\tw{a_1}]$ and $[\tw{a_1},\tw{c_1}]$. Thus, connector $\cn{u} \in [\tw{c_2},\tw{a_2}]$ is a connector of $\tw{c}$. Symmetrically, $\cn{y}\in[\tw{b_1},\tw{c_1}]$ is a connector of $\tw{c}$. By Lemma~\ref{lm:22} applied for $\tw{a}$, there are two $\tw{a}$'s connectors in $[\tw{a_1}, \tw{a_2}]$; thus, $\cn{x} \in [\tw{a_1},\tw{b_1}]$ is a connector of $\tw{a}$\footnote{Interval $[\tw{a_1},\tw{a_2}]$ is the union of $[\tw{a_1},\tw{b_1}]$, $[\tw{b_1},\tw{c_1}]$, $[\tw{c_1},\tw{c_2}]$, and $[\tw{c_2},\tw{a_2}]$. As in the last three intervals there exist connectors of $\tw{c}$ including $\cn{ac} \in [\tw{c_1},\tw{c_2}]$, it follows that the only interval, where the second connector of $\tw{a}$ can be, is $[\tw{a_1},\tw{b_1}]$.}. Symmetrically, $\cn{v}$ is a connector of $\tw{b}$. Since $\cn{v} \in [\tw{a_2},\tw{b_2}]$ is a connector of $\tw{b}$,  by Lemma~\ref{lm:cross}, it follows that $\cn{x} \in [\tw{a_1},\tw{b_1}]$ is also a connector of $\tw{b}$, which implies that $\cn{x} = \cn{ab}$ (recall that $\cn{x}$ is already shown to be a connector of $\tw{a}$). Since $\cn{x} \in [\tw{a_1},\tw{b_1}]$ is a connector of $\tw{a}$, again by Lemma~\ref{lm:cross}, it follows that $\cn{v} \in [\tw{a_2},\tw{b_2}]$ must be a connector of $\tw{a}$. Since we have already shown that $\cn{v}$ is a connector of $\tw{b}$, it follows that $\cn{v} = \cn{ab}$. This is a contradiction, as $\cn{ab} \in [\tw{a_1},\tw{b_1}]$.\qed
\end{proof}

\begin{pattern}[$\boldmath \dots \tw{a_1} \cdot \tw{b_1} \cdot \tw{c_1} \cdot \tw{d_1} \cdot \tw{a_2} \dots$]
\label{case5a}
Between any twin pair, it is impossible to have exactly three pairwise different twins.
\end{pattern}
\begin{proof}
Assume to the contrary that between $\tw{a_1}$ and $\tw{a_2}$ there exist exactly three pairwise different twins $\tw{b_1}, \tw{c_1}, \tw{d_1}$, that is, the order is $(\dots \tw{x} \cdot \tw{a_1} \cdot \tw{b_1} \cdot \tw{c_1} \cdot \tw{d_1} \cdot \tw{a_2} \cdot \tw{y} \dots)$, where $\tw{x}$ and $\tw{y}$ are the twins preceding $\tw{a_1}$ and following $\tw{a_2}$. If $\tw{x} = \tw{b_2}$, then $\tw{a}$ and $\tw{b}$ form Forbidden Pattern~\ref{case3};  if $\tw{x} = \tw{c_2}$, then $\tw{a}$ and $\tw{c}$ form Forbidden Pattern~\ref{case4b}; if $\tw{x} = \tw{d_2}$, then $\tw{d}$ and $\tw{a}$ form Forbidden Pattern~\ref{caseY}. Thus, $\tw{x} = \tw{e_1}$. By symmetry, $\tw{y} = \tw{e_2}$ holds. But then $\tw{a}$ and $\tw{e}$ form Forbidden Pattern~\ref{caseX}.\qed
\end{proof}

\begin{pattern}[$\boldmath \dots \tw{a_1} \cdot \tw{b_1} \cdot \tw{c_1} \cdot \tw{c_2} \cdot \tw{a_2} \dots$]
\label{case5b}
Between any twin pair, it is impossible to have exactly three twins, such that two of them form a pair.
\end{pattern}
\begin{proof}
Assume to the contrary that between $\tw{a_1}$ and $\tw{a_2}$ there exist exactly three twins, such that two of them form a pair. Let $\tw{b_1}, \tw{c_1}, \tw{c_2}$ be these twins. By Forbidden Pattern~\ref{case3}, $\tw{c_1}$ and $\tw{c_2}$ are consecutive, that is, the order is $(\tw{a_1} \cdot \tw{b_1} \cdot \tw{c_1} \cdot \tw{c_2} \cdot \tw{a_2} \cdot \tw{u} \cdot \tw{v} \cdot \tw{x} \cdot \tw{y} \cdot \tw{z} \cdot)$, where $\tw{u},\tw{v},\tw{x},\tw{y}$ and $\tw{z}$ are the (remaining) twins following $\tw{a_2}$. If $\tw{b_2} = \tw{z}$, then $\tw{a}$ and $\tw{b}$ form Forbidden Pattern~\ref{case3}; if $\tw{b_2} = \tw{y}$, then $\tw{a}$ and $\tw{b}$ form Forbidden Pattern~\ref{case4b}; if $\tw{b_2} = \tw{u}$, then $\tw{a}$ and $\tw{b}$ form Forbidden Pattern~\ref{caseY}. It follows that either $\tw{b_2} = \tw{v}$ or $\tw{b_2} = \tw{x}$ holds.

Assume first that $\tw{b_2} = \tw{v}$ holds. In this case, if twins $\tw{u}$ and $\tw{x}$ were of the same kind, then they would form Forbidden Pattern~\ref{case3}. Hence, $\tw{u}$ and $\tw{x}$ are different twins. Assume w.l.o.g.\ that $\tw{u} = \tw{d_1}$ and $\tw{x} = \tw{e_1}$. Thus, $\{\tw{y},\tw{z}\} = \{\tw{d_2},\tw{e_2}\}$. Clearly, if $\tw{d_2} = \tw{y}$ and $\tw{e_2} = \tw{z}$, then $\tw{d}$ and $\tw{e}$ form Forbidden Pattern~\ref{case3}. Thus, $\tw{e_2} = \tw{y}$ and $\tw{d_2} = \tw{z}$ holds. So, the order is $(\tw{a_1} \cdot \tw{b_1} \cdot \tw{c_1} \cdot \tw{c_2} \cdot \tw{a_2} \cdot \tw{d_1} \cdot \tw{b_2} \cdot \tw{e_1} \cdot \tw{e_2} \cdot \tw{d_2} \cdot)$. Now, it is not difficult to see that $\tw{a}$ and $\tw{d}$ form Forbidden Pattern~\ref{caseX}.

To complete the proof, assume now that $\tw{b_2} = \tw{x}$. If twins $\tw{u}$ and $\tw{v}$ were different, say w.l.o.g.\ that $\tw{u} = \tw{d_1}$ and $\tw{v} = \tw{e_1}$, then $\tw{E_1}$ and $\tw{E_2}$ would form Forbidden Pattern~\ref{case3} or~\ref{case4b}, as one of $\tw{y}$ and $\tw{z}$ must be $\tw{e_2}$. Hence, $\tw{u}$ and $\tw{v}$ are same twins. Symmetrically, $\tw{y}$ and $\tw{z}$ are also same twins. Assume w.l.o.g.\ that $\tw{d_1} = \tw{u}$, $\tw{d_2} = \tw{v}$, $\tw{e_1} = \tw{y}$, and $\tw{e_2} = \tw{z}$, that is, the order is $(\tw{a_1} \cdot \tw{b_1} \cdot \tw{c_1} \cdot \tw{c_2} \cdot \tw{a_2} \cdot \tw{d_1} \cdot \tw{d_2} \cdot \tw{b_2} \cdot \tw{e_1} \cdot \tw{e_2} \cdot)$. Note that none of our forbidden patterns is violated.
%Surprisingly, this order does not violate any of our forbidden patterns, which implies that we inevitably have to continue our case analysis.

By Lemma~\ref{lm:22}, the connector between $\tw{d_1}$ and $\tw{d_2}$ is a connector of $\tw{D}$. Since $\tw{e}$ and $\tw{c}$ are close twins, it follows by Corollary~\ref{lm:close}, that the connector between $\tw{d_1}$ and $\tw{d_2}$ is neither $\cn{cd}$ nor $\cn{de}$. Hence, it is one of $\cn{bd}$ and $\cn{ad}$; by symmetry, assume it is $\cn{bd}$. Since $\tw{a}$ and $\tw{b}$ form a crossing twin pair and since $\cn{bd} \in [\tw{a_2},\tw{b_2}]$, by Lemma~\ref{lm:cross}.\ref{c:cross-1} it follows that $\delta_{\tw{b}}(\tw{a_1}, \tw{b_1}) = 1$, which implies that the connector in $[\tw{a_1},\tw{b_2}]$ is a connector of $\tw{b}$; see Fig.~\ref{fig:case5b1}, where we have assumed w.l.o.g.\ that $(\tw{b_1}, \cn{bd})$ is green, $(\tw{b_2}, \cn{bd})$ is red, $(\tw{d_2}, \cn{bd})$ is blue, and $(\tw{d_1}, \cn{bd})$ is orange.

\begin{figure}[t]
	\centering
	\subfloat[\label{fig:case5b1}{}]{
	\includegraphics[page=1, width=0.19\textwidth]{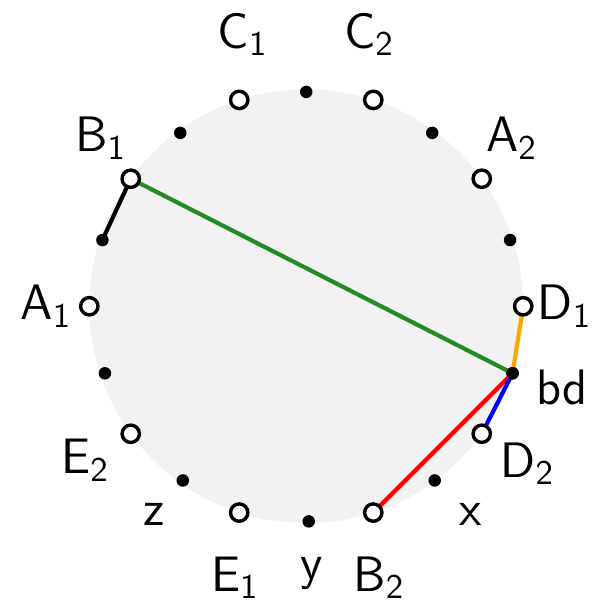}}
	\hfil
	\subfloat[\label{fig:case5b2}{}]{
	\includegraphics[page=2, width=0.19\textwidth]{case5b}}
	\hfil
	\subfloat[\label{fig:case5b3}{}]{
	\includegraphics[page=3, width=0.19\textwidth]{case5b}}
	\hfil
	\subfloat[\label{fig:case5b4}{}]{
	\includegraphics[page=4, width=0.19\textwidth]{case5b}}
	\hfil
	\subfloat[\label{fig:case5b5}{}]{
	\includegraphics[page=5, width=0.19\textwidth]{case5b}}
	\caption{
	Illustration for the proof of Forbidden Pattern~\ref{case5b}
	%; black indicate unknown (not yet determined) edge color.
}
\end{figure}

Let $\cn{x},\cn{y},\cn{z}$ be connectors in $[\tw{d_2}, \tw{b_2}]$, $[\tw{b_2},\tw{e_1}]$ and $[\tw{e_1},\tw{e_2}]$. Since $(\tw{b_2},\cn{bd})$ cannot be crossed by another red edge, it follows that $(\tw{d_2}, \cn{x})$ exists, and that it is red. Hence, $\cn{x}$ is a connector of $\tw{D}$, and it can be easily seen that $(\tw{d_1}, \cn{x})$ is blue; Fig.~\ref{fig:case5b2} for an illustration. Since $\cn{bd} \in [\tw{d_1},\tw{d_2}]$, connector $\cn{x}$ is either $\cn{ad}$, or $\cn{cd}$, or $\cn{de}$. If $\cn{x}$ were $\cn{ad}$, then the edge $(\tw{a_2}, \cn{x})$ must be inevitably orange, which implies that edge $(\tw{a_1}, \cn{x})$ is green. However, this color makes  impossible to route a green edge for $\tw{d_2}$, since the connector of $\tw{b}$ in $[\tw{a_1},\tw{b_1}]$ is not $\cn{bd}$; see Fig.~\ref{fig:case5b3}. If $\cn{x}$ were $\cn{cd}$, then both edges $(\tw{c_1}, \cn{x})$ and $(\tw{c_2}, \cn{x})$ would have to cross the green edge $(\tw{b_1}, \cn{db})$. However, this is a contradiction again, since one of them must indeed be green and orange; see see Fig.~\ref{fig:case5b4}. Therefore, $\cn{x} = \cn{de}$. This implies that edge $(\tw{b_2}, \cn{y})$ exists and it is of the same color as $(\tw{e_1}, \cn{de})$. Similarly, edge $(\tw{b_2}, \cn{z})$ exists and it is of the same color as $(\tw{e_2}, \cn{de})$; see Fig.~\ref{fig:case5b5}. Hence, both $\cn{y}$ and $\cn{z}$ are connectors of $\tw{B}$. So, including the connector of $\tw{b}$ in the interval $[\tw{a_1},\tw{a_2}]$ there exist in total three connectors of $\tw{B}$ on the interval $[\tw{b_2},\tw{b_1}]$, contradicting Lemma~\ref{lm:22}.\qed
\end{proof}

\begin{theorem}
\label{thm:folkman}
The dispersable book thickness of the Folkman graph is five.
\end{theorem}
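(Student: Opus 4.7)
The plan is to establish the theorem by bounding $\dbt{G}$ from both sides. For the upper bound $\dbt{G} \le 5$, I would simply present the explicit dispersable $5$-page book embedding of the Folkman graph depicted in Fig.~\ref{fig:folkman-dispersable} in Appendix~\ref{app:folkman}. For the lower bound $\dbt{G} > 4$, I would proceed by contradiction, assuming that a dispersable $4$-page embedding exists.

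The first step is to invoke Lemma~\ref{lm:alternate} to reduce the problem to choosing a cyclic order of the $10$ twin vertices. The developed Forbidden Patterns~\ref{case3}, \ref{case4a}, \ref{case4b}, \ref{case5a}, and~\ref{case5b}, each of which excludes having $1$, $2$, or $3$ other twins between a twin pair on either side, force every pair to be either \emph{close} ($0$ and $8$ twins on the two sides) or \emph{far} ($4$ and $4$).

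Next, I would split the argument by the number of close pairs. If every pair is far, each pair occupies a diameter $\{i, i+5\}$ of the $10$-cycle, and any two pairs at consecutive diameters such as $\{0, 5\}$ and $\{1, 6\}$ produce a configuration of the form $\tw{a_1} \cdot \tw{b_1} \cdots \tw{a_2} \cdot \tw{b_2}$, violating Forbidden Pattern~\ref{caseY}. Otherwise, some pair---say $\tw{a_1}, \tw{a_2}$ at positions $0, 1$---is close. By Lemma~\ref{lm:22} and Corollary~\ref{lm:close}, the connector between them is $\cn{ax}$ for a far pair $\tw{x}$, and a position-counting enumeration of the remaining four pairs under the close/far constraint shows that only two configurations survive up to cyclic rotation: configuration~(I), with three far pairs at $\{2, 7\}, \{3, 8\}, \{4, 9\}$ and a close pair at $\{5, 6\}$; and configuration~(II), with one far pair $\tw{b}$ at $\{2, 7\}$ and three close pairs at $\{3, 4\}, \{5, 6\}, \{8, 9\}$. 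Configuration~(I) is immediately excluded because the twins at positions $2, 3, 4$ belong to three pairwise-crossing pairs, forming a crossing triple that violates Forbidden Pattern~\ref{ctriple}.

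The main obstacle will be configuration~(II), which does not visibly violate any previously established forbidden pattern. The plan is to exploit Corollary~\ref{lm:close} applied to each close pair $\tw{a}, \tw{y}, \tw{z}, \tw{w}$: since $\tw{b}$ is the unique far pair, the between-connectors are forced to be $\cn{ab}, \cn{yb}, \cn{zb}, \cn{wb}$, exhausting $\tw{b}$'s four connectors in the four ``between close pair'' slots. Then, after confirming the $2$-$2$ configuration of $\tw{b}$ via Lemma~\ref{lm:22} and constraining the placements of the remaining six connectors via Lemma~\ref{lm:noncross}, I would conclude with a coloring argument on the eight edges incident to $\tw{b_1}$ and $\tw{b_2}$: tracking the forced colors through the $\tw{b}$-connectors and propagating them via crossings with the edges of the close pairs should yield a color conflict, in the spirit of the coloring-based contradictions used in Forbidden Patterns~\ref{caseX} and~\ref{case5b}.
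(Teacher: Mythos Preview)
Your proposal is correct and takes essentially the same approach as the paper: reduce via the forbidden patterns to the close/opposite dichotomy for every twin pair, then eliminate the surviving configurations using Corollary~\ref{lm:close} and Lemma~\ref{lm:noncross}. The only difference is organizational---the paper argues directly that at most one pair can be opposite and then splits into the all-close and one-opposite cases, whereas you enumerate positions after fixing a close pair; both routes land on the same terminal configuration (your~(II)), and your endgame for it (pin the between-connectors via Corollary~\ref{lm:close}, constrain the remaining slots via Lemma~\ref{lm:noncross}, then force a color conflict) is exactly what the paper does.
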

\begin{proof}
To prove this theorem, it remains to conclude that the dispersable book thickness of the Folkman graph is not four. Let $d(\tw{a_1}, \tw{a_2})$ be the number of twin vertices in $[\tw{a_1}, \tw{a_2}]$, and let $d(\tw{a}) = \min \left( d(\tw{a_1}, \tw{a_2}), d(\tw{a_2}, \tw{a_1}) \right)$. By Forbidden Pattern~\ref{case3}, $d(\tw{a}) \neq 3$; by Forbidden Patterns~\ref{case4a} and~\ref{case4b}, $d(\tw{a}) \neq 4$; by Forbidden Patterns~\ref{case5a} and~\ref{case5b}, $d(\tw{a}) \neq 5$. Therefore, either $d(\tw{a}) = 2$ or $d(\tw{a}) = 6$, that is, two twins are either close or are opposite in a dispersable order.

Assume now that there is a pair of twins, say $\tw{a_1}, \tw{a_2}$, that are opposite, and let $\tw{X}, \tw{Y}, \tw{Z}, \tw{W}$ be the twins in $[\tw{a_1}, \tw{a_2}]$. If twin $\tw{X}$ and its counterpart were also opposite, then $\tw{A}$ and $\tw{X}$ would form Forbidden Pattern~\ref{caseX}. Hence, $\tw{X}, \tw{Y}$ are close. Symmetrically, $\tw{Z}, \tw{W}$ are also close. Hence, at most one pair of twins are opposite.

Fig.~\ref{fig:remain} illustrates the remaining two cases, in which either no or one pair of twins are opposite. In the former case, by Lemma~\ref{lm:22} there is a $\tw{A}$'s connector, say w.l.o.g.~$\cn{ab}$, in $[\tw{a_1}, \tw{a_2}]$. Then, by Corollary~\ref{lm:close}, twins $\tw{b_1}, \tw{b_2}$ must be far; a contradiction.

\begin{figure}[t]
	\centering
	\subfloat[\label{fig:remain1}{}]{
	\includegraphics[page=1, width=0.19\textwidth]{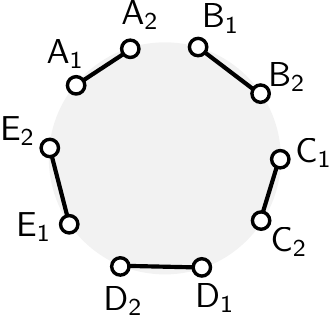}}
	\hfil
	\subfloat[\label{fig:remain2}{}]{
	\includegraphics[page=2, width=0.19\textwidth]{thmFolkman}}
	\caption{Illustration for the proof of Theorem~\ref{thm:folkman}}
	\label{fig:remain}
\end{figure}

To complete the proof, we find a contradiction for the case, in which there exists one pair of opposite twins, say w.l.o.g. $\tw{c_1}, \tw{c_2}$; see Fig.~\ref{fig:remain2}. By Corollary~\ref{lm:close}, the connector between each of the four close pairs can only be a connector of $\tw{C}$. Hence, the order is $(\tw{c_2}~\cn{x}~\tw{a_1}~\cn{ac}~\tw{a_2}~\cn{y}~\tw{b_1}~\cn{bc}~\tw{b_2}~\cn{z}~\tw{c_1}~\cn{u}~\tw{d_1}~\cn{dc}~\tw{d_2}~\cn{v}~\tw{e_1}~\cn{ec}~\tw{e_2}~\cn{w})$, where $\cn{x},\cn{y},\cn{z},\cn{u},\cn{v},\cn{w}$ are the remaining connectors in the order.

By Lemma~\ref{lm:noncross} applied for $\tw{a}$ and $\tw{c}$, twins $\tw{c_1}$ and $\tw{c_2}$ are separated by $\tw{a}$'s connectors; that is, there exist $\tw{a}$'s connectors on both intervals $[\tw{a_2}, \tw{c_1}]$ and $[\tw{c_2},\tw{a_1}]$. Thus, $\cn{x}$ is a connector of $\tw{a}$. Similarly, we conclude that $\cn{z}$ is $\tw{b}$'s connector. Next observe that $\cn{z} \neq \cn{ab}$, otherwise five edges, $(\cn{ab}, \tw{a_1})$, $(\cn{ab}, \tw{a_2})$, $(\cn{ab},\tw{b_1})$, $(\cn{bc},\tw{c_2})$, $(\cn{bc},\tw{c_1})$, would pairwise cross. Hence, $\cn{y} = \cn{ab}$. Arguing symmetrically, we find that $\cn{v}=\cn{de}$, $\cn{u}$ is $\tw{d}$'s connector, and $\cn{w}$ is $\tw{e}$'s connector. That means that connector $\cn{bd}$ is either $\cn{z}$ or $\cn{u}$. Both cases are impossible, as edges $(\tw{d_1},\cn{bd})$ or $(\tw{b_2},\cn{bd})$ would cross four $\tw{c_1}$'s edges; a contradiction.\qed
\end{proof}

\begin{corollary}
The Folkman graph is not dispersable.
\end{corollary}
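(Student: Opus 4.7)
The plan is very short because this corollary follows immediately from Theorem~\ref{thm:folkman} together with the definition of a dispersable graph. Recall that a graph $G$ is dispersable precisely when $\dbt{G}=\Delta(G)$, and that $\dbt{G}\ge\Delta(G)$ always holds.

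First, I would note that the Folkman graph is $4$-regular by construction (each vertex of the original $K_5$ has degree $4$ in the graph obtained after subdividing and duplicating, and each connector vertex is adjacent to exactly the two twin pairs corresponding to its two endpoints in $K_5$, i.e., to $4$ twin vertices). Hence $\Delta(\text{Folkman})=4$.

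Next, I would invoke Theorem~\ref{thm:folkman}, which states $\dbt{\text{Folkman}}=5$. Combining these two facts gives $\dbt{\text{Folkman}}=5\neq 4=\Delta(\text{Folkman})$, so by the definition of dispersability, the Folkman graph is not dispersable.

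There is no real obstacle here: all the work has already been absorbed into Theorem~\ref{thm:folkman}, whose lower bound was established through the case analysis on forbidden patterns in the preceding sections and whose upper bound of five was witnessed by the explicit embedding referenced in Appendix~\ref{app:folkman}. The corollary is just the translation of the numerical gap $5>4$ into the language of dispersability.
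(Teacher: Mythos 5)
Your proposal is correct and is exactly the argument the paper intends (the paper states the corollary without proof precisely because it follows immediately from Theorem~\ref{thm:folkman} and the fact that the Folkman graph is $4$-regular, so $\dbt{G}=5\neq 4=\Delta(G)$). Nothing further is needed.
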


\section{The Dispersable Book Thickness of the Gray Graph}
\label{sec:3-regular}
In this section, we study the book thickness of the Gray graph~\cite{BOUWER197232}, which can be constructed in two steps starting from three copies of $K_{3,3}$ as follows. First, we subdivide every edge as we did also with the Folkman graph (see Fig.~\ref{fig:gray_2}). Then, for each newly introduced vertex $u$ in the first copy, with $v$ and $w$ being its counterparts in the other two copies, we add a new vertex connected to $u$, $v$ and $w$  (see Fig.~\ref{fig:gray_3}). The resulting graph is the Gray graph, which is clearly $3$-regular and bipartite.

\begin{figure}[t]
	\centering
	\subfloat[\label{fig:gray_1}{}]
	{\includegraphics[page=1,width=0.3\textwidth]{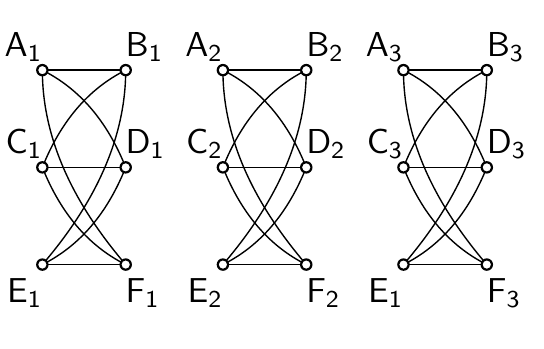}}
	\hfil 
	\subfloat[\label{fig:gray_2}{}]
	{\includegraphics[page=2,width=0.3\textwidth]{gray}}
	\hfil
	\subfloat[\label{fig:gray_3}{}]
	{\includegraphics[page=3,width=0.3\textwidth]{gray}}
	\caption{Construction steps for the Gray graph~\cite{FOLKMAN1967215}:
	(a)~the vertices of the three copies of $K_{3,3}$ are denoted by 
	$\tw{a}_i$, $\tw{b}_i$, $\tw{c}_i$, $\tw{d}_i$, $\tw{e}_i$, $\tw{f}_i$; $i=1,2,3$. 
	(b)~A vertex introduced in the first step between $\tw{X}_i$ and $\tw{Y}_i$ is denoted by $\cn{xy_i}$.
	(c)~A vertex introduced in the second step connecting $\cn{xy_1}$, $\cn{xy_2}$ and $\cn{xy_3}$ is denoted by $\tw{xy}$,
	 where $\tw{X},\tw{Y} \in \{\tw{a},\tw{b},\tw{c},\tw{d},\tw{f},\tw{e}\}$.}
	\label{fig:gray}
\end{figure}

Our computer-aided proof is based on appropriately adjusting a relatively recent formulation of the (ordinary) book embedding problem as a SAT instance~\cite{BKZ15}. In this formulation, there exist three different types of variables, denoted by $\sigma$, $\phi$ and $\chi$, with the following meanings: 
\begin{inparaenum}[(i)]
\item for a pair of vertices $u$ and $v$, variable $\sigma(u,v)$ is $\texttt{true}$, if and only if $u$ is to the left of $v$ along
the spine, 
\item for an edge $e$ and a page $i$, variable $\phi_i(e)$ is $\texttt{true}$, if and only if edge $e$ is assigned to page $i$ of the book, and 
\item for a pair of edges $e$ and $e'$, variable $\chi(e,e')$ is $\texttt{true}$, if and only if $e$ and $e'$ are assigned to the same page.
\end{inparaenum}  
Hence, there exist in total $O(n^2+m^2+pm)$ variables, where $n$ denotes the number of vertices of the graph, $m$ its number of edges, and $p$ the number of available pages. An additional $O(n^3+m^2)$ clauses ensure that the underlying order is indeed linear, and that no two edges of the same page cross; for details refer to~\cite{BKZ15}.

For the dispersable case, we must additionally guarantee that no two edges with a common endvertex are assigned to the same page. This requirement, however, can be easily encoded by the following clauses:
\begin{eqnarray*}
\neg \chi(e,e'), \quad \forall~e,e' \mbox{ with a common endvertex}
\end{eqnarray*}
Observe that there is no need to introduce new variables, and that the total number of constraints is not asymptotically affected. Using this adjustment, we proved that the dispersable book thickness of the Gray graph cannot be three, and that it admits a dispersable book embedding with four pages; see Fig.~\ref{fig:gray-dispersable} in Appendix~\ref{app:gray}. We summarize these findings in the following theorem.

\begin{theorem}
\label{thm:gray}
The dispersable book thickness of the Gray graph is four.
\end{theorem}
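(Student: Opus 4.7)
The plan is to reduce both bounds to satisfiability questions on the SAT formulation just described. For the upper bound $\dbt{\text{Gray}} \leq 4$, I would exhibit an explicit four-page dispersable embedding (the one in Fig.~\ref{fig:gray-dispersable} of Appendix~\ref{app:gray}). Verifying that a given vertex order together with an edge-to-page assignment is a dispersable book embedding is purely mechanical: for each page check that the induced subgraph is a matching (i.e., every vertex is incident to at most one edge on that page), and that no two edges on the same page nest/cross improperly, which is the standard arc-crossing test on the spine order. Since the Gray graph has $54$ vertices and $81$ edges, these checks are small and can be carried out by hand or automatically.

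For the lower bound $\dbt{\text{Gray}} \geq 4$ the plan is to instantiate the adjusted SAT formulation with $p = 3$ pages on the Gray graph and certify that the resulting instance is \textsc{unsat}. Concretely, I would generate the variables $\sigma(u,v)$, $\phi_i(e)$, and $\chi(e,e')$ for $i \in \{1,2,3\}$ together with the clauses from~\cite{BKZ15} encoding a linear vertex order and the no-two-same-page-edges-cross requirement, and then augment them with the dispersability clauses $\neg \chi(e,e')$ for every pair of edges $e,e'$ sharing an endpoint. A modern SAT/UNSAT solver such as \texttt{Glucose} or \texttt{CaDiCaL} can then be run, and the resulting DRAT/DRUP unsatisfiability proof provides an independently checkable certificate that no $3$-page dispersable book embedding exists.

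Combining the two bounds yields $\dbt{\text{Gray}} = 4$. The main obstacle is not the logic of the argument, which is straightforward given the SAT encoding, but the computational and verification effort: the Gray graph has a rich automorphism group (it is semi-symmetric), so symmetry breaking in the SAT instance (e.g., fixing the position of one vertex or of an edge of an orbit representative) is essential to keep the solver tractable, and care must be taken that these added symmetry-breaking clauses are provably sound, so that the \textsc{unsat} certificate transfers back to the original problem. A secondary concern is trusting the solver output; this is handled by extracting and independently verifying the DRAT proof with a checker such as \texttt{drat-trim}, which reduces the mathematical content of the lower bound to a small, deterministic verifier on the emitted certificate.
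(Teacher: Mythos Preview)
Your proposal is correct and follows essentially the same approach as the paper: exhibit an explicit four-page dispersable embedding for the upper bound, and use the SAT formulation of~\cite{BKZ15} augmented with the dispersability clauses $\neg\chi(e,e')$ to certify that no three-page dispersable embedding exists. Your additional remarks on symmetry breaking and DRAT certification are sensible practical refinements, but the underlying argument is the same computer-aided proof the paper presents.
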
 

\begin{corollary}
The Gray graph is not dispersable.
\end{corollary}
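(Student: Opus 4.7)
The corollary follows immediately from Theorem~\ref{thm:gray} and the definition of dispersability, so the proof plan is essentially a one-line deduction rather than a genuine argument. My plan is to invoke the definition: a graph $G$ is \emph{dispersable} exactly when $\dbt{G} = \Delta(G)$. For the Gray graph, $\Delta(G) = 3$ since it is $3$-regular, while Theorem~\ref{thm:gray} establishes that $\dbt{G} = 4$. The two quantities disagree, so the dispersability criterion fails.

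Concretely, I would write a single short paragraph that recalls these two facts and concludes $\dbt{G} = 4 > 3 = \Delta(G)$, whence $G$ cannot be dispersable. No case analysis, no auxiliary lemma, and no additional construction are needed; the content is entirely bookkeeping on top of Theorem~\ref{thm:gray}.

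There is no real obstacle here. The only thing to be careful about is phrasing: one should make explicit that $3$-regularity of the Gray graph (already noted when the graph was introduced at the start of Section~\ref{sec:3-regular}) forces $\Delta = 3$, so that the strict inequality $\dbt{G} > \Delta(G)$ follows from the theorem and not from any further work.
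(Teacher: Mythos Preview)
Your proposal is correct and matches the paper's approach exactly: the corollary is stated without proof in the paper, as it follows immediately from Theorem~\ref{thm:gray} and the definition of dispersability, precisely as you describe.
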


\section{$\boldmath 3$-connected $\boldmath 3$-regular Bipartite Planar Graphs}
\label{sec:3-regular-planar}
% !TEX root = bookmatching.tex
In the previous section, we demonstrated that the Gray graph, which is $3$-connected, $3$-regular and bipartite, is not dispersable. This graph, however, is not planar, as it contains $K_{3,3}$ as minor. In the following, we prove that when adding planarity to the requirements, every such graph is dispersable. For the sake of simplicity, we refer to a $3$-connected $3$-regular bipartite planar graph as \emph{Barnette graph} for short (due to the well-known Barnette's Conjecture~\cite{Bar69} which states that every such graph is Hamiltonian).

\begin{lemma}\label{lem:barnetteprops}
Let $G=(V, E)$ be an embedded Barnette graph and let $G^*=(V^*, E^*)$ be its dual. Then, there exists a $3$-edge coloring $\Ered \cupdot \Egreen \cupdot \Eblue = E$ for $G$, and a $3$-vertex coloring $\Dred \cupdot \Dgreen \cupdot \Dblue = V^*$ for $G^*$ which satisfy the following three properties: 
\begin{enumerate}[i.]
\item \label{prp:1} Every facial cycle of $G$ is \emph{bichromatic}, i.e., the edges on a facial cycle of $G$ alternate between two colors.
\item \label{prp:2} Every face of $G$ is colored differently from its bounding edges.
\item \label{prp:3} The edges of $G^*$ that connect vertices of $\Dgreen$ to vertices of $\Dblue$ are in one-to-one correspondence with the edges of $\Ered$, and induce a connected subgraph.
\end{enumerate}
\end{lemma}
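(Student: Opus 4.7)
The plan is to obtain both colorings simultaneously from a proper 3-vertex-coloring of the dual $G^*$. The key observation is that under our hypotheses $G^*$ is a 3-connected Eulerian planar triangulation: every face of $G^*$ is a triangle since $G$ is cubic, every vertex of $G^*$ has even degree since $G$ is bipartite (so every facial cycle of $G$ has even length), and 3-connectivity is inherited under planar duality. By Heawood's classical theorem, a planar triangulation with all vertex degrees even is 3-chromatic, so $G^*$ admits a proper 3-vertex-coloring $\Dred \cupdot \Dgreen \cupdot \Dblue = V^*$, which already realises the dual part of the claim.

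From this vertex-coloring I would define the edge-coloring of $G$ in the only natural way: for each edge $e \in E$ the two dual endpoints $f_1^*, f_2^*$ are adjacent in $G^*$ and hence receive distinct colors, and I set the color of $e$ to be the unique third color. Property~\ref{prp:2} is then immediate from the construction. I would next verify that this is in fact a proper 3-edge-coloring: at any $v \in V$ the three incident edges are duals of the three sides of the triangular face of $G^*$ dual to $v$, so the three surrounding face-colors are pairwise distinct, and hence so are the three edge-colors. Property~\ref{prp:1} follows quickly as well: the bounding edges of a face of color $c$ all avoid color $c$ by property~\ref{prp:2}, while consecutive bounding edges are incident in $G$ and so must differ, forcing an alternation between the two colors distinct from $c$.

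It remains to establish property~\ref{prp:3}, which is the step I expect to be the main obstacle. The bijective correspondence is immediate from the construction: a dual edge joins a green and a blue vertex exactly when the primal edge it crosses avoids both those colors, i.e., is red. For connectivity, I would exploit the fact that $G^*$ is a 3-connected planar triangulation, so the link of every vertex $w$ is a cycle through its neighbors. If $w \in \Dred$, then this link cycle has even length (the degree of $w$) and all its vertices lie in $\Dgreen \cup \Dblue$; by propriety of the vertex-coloring it therefore alternates between green and blue, and in particular every one of its edges is a green-blue edge.

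Given arbitrary $u, v \in \Dgreen \cup \Dblue$, I would start from any $u$-$v$ path in $G^*$ and iteratively detour around each red vertex $w$ encountered, replacing the subpath $w^{-} \cdot w \cdot w^{+}$ by either of the two arcs of the link of $w$ joining $w^{-}$ and $w^{+}$. By the previous observation, each such detour lies entirely inside the subgraph induced by green-blue edges, so iterating over all red vertices on the initial path produces a $u$-$v$ walk inside that subgraph, proving the required connectivity.
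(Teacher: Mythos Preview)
Your argument is correct, and for Properties~\ref{prp:1} and~\ref{prp:2} it coincides with the paper's: both derive the edge-coloring of $G$ from a proper $3$-vertex-coloring of the Eulerian triangulation $G^*$ by assigning each primal edge the color missing from its two incident faces.

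The genuine difference lies in Property~\ref{prp:3}. The paper does not argue connectivity directly; instead it observes that a maximal planar graph with chromatic number~$3$ is \emph{uniquely} $3$-colorable, and then invokes a theorem of Chartrand and Geller stating that in any uniquely $k$-colorable graph the union of any two color classes induces a connected subgraph. Your route is more hands-on: you exploit that the link of each red vertex in the triangulation is an even cycle alternating green--blue, so any path in $G^*$ can be rerouted around its red vertices using only green--blue edges. This works cleanly because no two red vertices are adjacent, so the detours are independent, and the surviving original edges automatically have one green and one blue endpoint by propriety of the coloring. Your argument is self-contained and avoids the unique-colorability machinery; the paper's is terser but leans on two external citations. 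Both are sound.
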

\begin{proof}
Since $G$ is $3$-regular and bipartite, $G^*$ is maximal planar and every vertex has even degree. By the $3$-color theorem, $G^*$ is $3$-vertex colorable~\cite{steinberg93}. Also, since $G^*$ is maximal planar and its chromatic number is~$3$, $G^*$ is \emph{uniquely $3$-colorable}~\cite{cg69}, i.e., it has a unique $3$-vertex coloring up to permutation of the colors. Let $\Dred, \Dgreen$ and $\Dblue$ be this $3$-vertex coloring of $G^*$. 

We first show Property~(\ref{prp:2}).  Every edge $e$ of $G$ bounds two faces that are colored differently in $G^*$. Hence, we can assign to $e$ the third unused color. Since every vertex $v$ of $G$ is incident to three faces (which pairwise do not share the same color in $G^*$), no two edges incident to $v$ have the same color. Thus, the result is a proper $3$-edge coloring $\Ered, \Egreen, \Eblue$ of $G$.  

Now Property~(\ref{prp:1}) follows easily: On every facial cycle $p$ of $G$, two edges sharing an endpoint have distinct colors. By Property~(\ref{prp:2}), their colors are different from the color of $p$ in $G^*$. Thus, every face of $G$ is bichromatic.  

Next we show Property~(\ref{prp:3}). By Property~(\ref{prp:2}), any edge of $G^*$ that corresponds to an edge of $\Ered$ has one endpoint in $\Dgreen$ and one in $\Dblue$. Conversely, by construction every edge of $G^*$ in the induced subgraph of $\Dgreen \cup \Dblue$ corresponds to an edge in $\Ered$ of $G$. Hence, the edges of $G^*$ that connect vertices of $\Dgreen$ to vertices of $\Dblue$ are indeed in one-to-one correspondence with the edges of $\Ered$. Property~(\ref{prp:3}) follows by a result of Chartrand and Geller~\cite{cg69}, who showed that for any $k$-vertex coloring of a uniquely $k$-colorable graph, the subgraph induced by any two of the $k$ colors is connected.\qed
\end{proof}

In the following, we show that it is always possible to determine a dispersable order for a Barnette graph $G$ such that the coloring of Lemma~\ref{lem:barnetteprops} for $G$ is a feasible page assignment. In particular, the green edges will always connect vertices that are consecutive in the dispersable order, which implies that they can be ``merged'' with either the red or the blue edges, yielding thus a (non-dispersable) two-page book embedding of $G$. Our construction is based on determining a so-called \emph{subhamiltonian cycle} $C$ for $G$, that is, a cyclic ordering of the vertices such that when adding any missing edges between consecutive vertices the resulting graph remains planar (thus, $C$ becomes a Hamiltonian cycle). This subhamiltonian cycle partitions the edges of $G$, such that the green and the red edges are either inside or on $C$, while the blue edges are in the exterior of $C$. With these properties in mind, we now state the main result of this section.

\begin{theorem}\label{thm:barnette}
Let $G = (V,E)$ be an embedded Barnette graph. Then, there exists a $3$-edge coloring $\Ered \cupdot \Egreen \cupdot \Eblue = E$ of $G$ and a subhamiltonian cycle $C$ satisfying the following properties:
\begin{enumerate}[i.]
\item \label{prp:4} every $e \in \Ered$ is in the interior of $C$ or on $C$, 
\item \label{prp:5} every $e \in \Eblue$ is in the exterior of $C$ or on $C$,
\item \label{prp:6} every $e \in \Egreen$ is on $C$.
\end{enumerate}
\end{theorem}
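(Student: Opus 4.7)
The plan is to construct $C$ as the boundary of a simply connected planar region $D$ assembled from faces of the embedding of $G$, using the $3$-edge coloring and the dual $3$-face coloring from Lemma~\ref{lem:barnetteprops}. Every face in $\Dblue$ (every red-green face) is placed inside $D$ and every face in $\Dred$ (every green-blue face) is placed outside $D$; the faces in $\Dgreen$ (red-blue faces) are assigned ``in'' or ``out'' via a partition $\Dgreen = V_g^{\text{in}} \cupdot V_g^{\text{out}}$ to be chosen.

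With this set-up the three required properties will follow directly from Property~(\ref{prp:2}) of Lemma~\ref{lem:barnetteprops}. Every green edge has one $\Dred$ face and one $\Dblue$ face on its two sides, so it must lie on $\partial D = C$, giving Property~(\ref{prp:6}). Every red edge has its $\Dblue$ face already inside $D$, so it lies in the interior of $D$ whenever its $\Dgreen$ face lies in $V_g^{\text{in}}$, and on $C$ otherwise, giving Property~(\ref{prp:4}). Symmetrically, every blue edge is either in the exterior of $D$ or on $C$, giving Property~(\ref{prp:5}).

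The theorem thus reduces to choosing the partition of $\Dgreen$ so that $D$ is simply connected, equivalently so that (a)~the subgraph of $G^*$ induced by $\Dblue \cup V_g^{\text{in}}$ is connected through the red dual edges, and (b)~the subgraph of $G^*$ induced by $\Dred \cup V_g^{\text{out}}$ is connected through the blue dual edges. Condition~(a) holds when $V_g^{\text{in}} = \Dgreen$ by Property~(\ref{prp:3}); condition~(b) is achievable by taking a Steiner tree of $\Dred$ inside the blue dual subgraph on $\Dred \cup \Dgreen$ (connected by the Chartrand--Geller theorem already used in the proof of Lemma~\ref{lem:barnetteprops}) and placing its $\Dgreen$-Steiner vertices into $V_g^{\text{out}}$. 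The main obstacle will be satisfying~(a) and~(b) simultaneously, since moving $\Dgreen$ vertices out of $V_g^{\text{in}}$ may disconnect the red dual subgraph; the plan to resolve this is to pick a minimum Steiner tree for~(b) and then perform local exchanges of Steiner vertices that happen to be cut vertices of the red dual subgraph on $\Dblue \cup V_g^{\text{in}}$, using the $3$-connectivity of $G^*$ (inherited from $G$ via Whitney's theorem) to guarantee the needed alternatives. Once $D$ is simply connected, $C = \partial D$ passes through every vertex of $G$ (via its unique green edge), and extends to a Hamiltonian cycle of a planar supergraph by adding consecutive-vertex chords inside $D$ and in its exterior; this yields the desired subhamiltonian cycle.
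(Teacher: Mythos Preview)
Your plan has a genuine gap, and in fact it is too strong: if it worked, it would prove Barnette's conjecture. Observe that for any partition $\Dgreen = V_g^{\text{in}} \cupdot V_g^{\text{out}}$, every vertex $v$ of $G$ has exactly two edges on $\partial D$: its green edge is always on $\partial D$ (one side in $\Dblue$, the other in $\Dred$), while its red and blue edges border the \emph{same} $\Dgreen$ face $f_g$ at $v$, so exactly one of them lies on $\partial D$ (the red one if $f_g\in V_g^{\text{out}}$, the blue one if $f_g\in V_g^{\text{in}}$). Hence $\partial D$ is a $2$-regular spanning subgraph of $G$, i.e.\ a disjoint union of cycles through all vertices; and conditions~(a) and~(b) make it a single cycle, hence a Hamiltonian cycle of $G$ itself. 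Since Barnette's conjecture is open, the step you leave to ``local exchanges of Steiner vertices using $3$-connectivity of $G^*$'' cannot be filled in general --- producing a partition of $\Dgreen$ satisfying (a) and~(b) simultaneously is at least as hard as the conjecture, and your proposal gives no mechanism guaranteeing such a partition exists.

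The paper sidesteps exactly this obstruction by \emph{not} requiring $C$ to consist of edges of $G$. It fixes a spanning tree $\mathcal{T}^*$ of the connected dual subgraph on $\Dblue\cup\Dgreen$ (this needs only Property~(\ref{prp:3}) of Lemma~\ref{lem:barnetteprops}, nothing about the $\Dred$ side) and grows $C$ by a traversal of $\mathcal{T}^*$: when entering a $\Dblue$ face it walks around the whole facial cycle, but when entering a $\Dgreen$ face it only visits the endpoints of the tree-edges $\EredT$, jumping between them by shortcut arcs that need not be edges of $G$. These shortcuts are what make $C$ merely subhamiltonian, and they are precisely what lets the construction go through without any connectivity requirement on $\Dred\cup V_g^{\text{out}}$.
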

\begin{proof}
In the proof, we assume that $\Ered \cupdot \Egreen \cupdot \Eblue$ is a $3$-edge coloring of $G$, and that $\Dred \cupdot \Dgreen \cupdot \Dblue$ is a $3$-vertex coloring of the dual $G^*=(V^*,E^*)$ of $G$ satisfying the Properties~\ref{prp:1}-\ref{prp:3} of Lemma~\ref{lem:barnetteprops}. By Lemma~\ref{lem:barnetteprops}.\ref{prp:3}, the subgraph $G_{bg}^*$ of $G^*$ induced by $\Dgreen \cup \Dblue$ is connected. Hence, we can construct a spanning tree $T^*$ of $G_{bg}^*$. This tree and the one-to-one correspondence between the edges of $\Ered$ and the edges of $G_{bg}^*$ yield a partition of $\Ered$ into two sets $\EredT$ and $\EredN$, such that $\EredT \cupdot \EredN = \Ered$, as follows. An edge $e \in \Ered$ belongs to $\EredT$, if the edge of $G^*$ corresponding to $e$ belongs to $\mathcal{T}^*$. Otherwise, $e$ belongs to $\EredN$. We also assume $\mathcal{T}^*$ to be rooted at a leaf  $\rho$, such w.l.o.g.\ $\rho \in \Dblue$.    

The proof is given by a recursive geometric construction of the subhamiltonian cycle $C$. Consider an arbitrary edge $(u,v) \in \EredT$ of $G$, and let $p$ and $q$ be the faces to its left and the right side, respectively, as we move along $(u,v)$ from $u$ to $v$. Then, $(p,q)$ is an edge of $\mathcal{T}^*$. Since $\mathcal{T}^*$ is a tree, the removal of $(p,q)$ results in two trees $\mathcal{T}^*_p$ and $\mathcal{T}^*_q$. W.l.o.g.\ we assume that $\rho$ belongs to $\mathcal{T}^*_p$. For the recursive step of our algorithm, we assume that we have already computed a \emph{simple} and \emph{plane} cycle $C_p$ for the subgraph $G_p=(V_p,E_p)$ of $G$ induced by the vertices of the faces of $G$ in $\mathcal{T}^*_p$, which satisfies the following additional invariants: 

\begin{enumerate}[{I.}1]
\item \label{i:1} edge $(u,v)$ is on $C_p$,
\item \label{i:2} every edge $e \in \EredT  \cap E_p$ is in the interior of $C_p$ or on $C_p$, 
\item \label{i:3} every edge $e \in \Eblue  \cap E_p$ is in the exterior of $C_p$ or on $C_p$, 
\item \label{i:4} every edge $e \in \Egreen \cap E_p$ is on $C_p$, and
\item \label{i:5} every edge $e \in \EredN$ that bounds two faces $h,h'$, with $h \in \mathcal{T}^*_p$ and $h' \notin \mathcal{T}^*_p$, is such that: 
\begin{enumerate}[i.]
\item \label{i:5.i}  if $h \in \Dblue$, then both endpoints of $e$ are on $C_p$,
\item \label{i:5.ii} if $h \in \Dgreen$, then none of the endpoints of $e$ is on $C_p$.
\end{enumerate}
\end{enumerate} 

\begin{figure}[t]
	\centering
	\subfloat[\label{fig:blue}{$q \in \Dblue$}]{
	\includegraphics[page=2,scale=0.5]{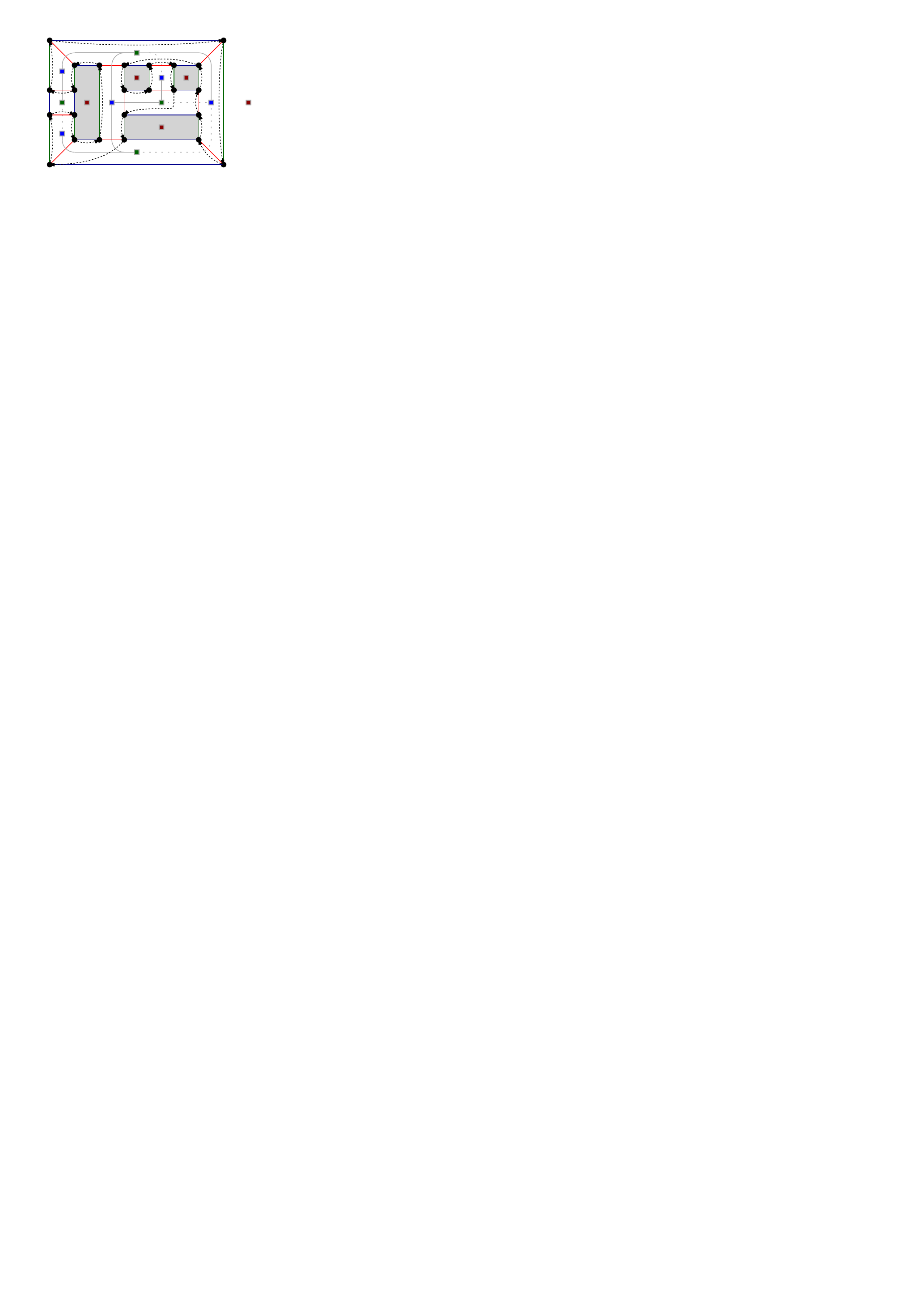}}
	\hfil
	\subfloat[\label{fig:green}{$q \in \Dgreen$}]{
	\includegraphics[page=3,scale=0.5]{cbp3_small}}
	\caption{Illustration for Theorem~\ref{thm:barnette}.
	The solid (dotted) gray edges belong to $\mathcal{T}^*$ ($G_{bg}^* \setminus \mathcal{T}^*$).	
	The solid (dashed) red edges belong to $\EredT$ ($\EredN$). 
	Cycle $C_q$ is drawn dotted black.
	}
	\label{fig:barnette}
\end{figure}

% Note that by Invariant I.\ref{i:5} we essentially require that $C_p$ contains all red edges of $\EredN$ that are on the facial cycle of every blue face. 
Let $G_q=(V_q,E_q)$ be the subgraph of $G$ induced by the vertices of the faces of $G$ in $\mathcal{T}^*_q$. Let also $q_1,\ldots,q_k$, with $k\geq 0$, be the children of $q$ in $\mathcal{T}^*$ (if any). We proceed by considering two cases; $q \in \Dblue$ and $q \in \Dgreen$; see Figs.~\ref{fig:blue} and~\ref{fig:green}, respectively. Note that by Lemma~\ref{lem:barnetteprops}.\ref{prp:1} and~\ref{lem:barnetteprops}.\ref{prp:2} in the former case, the edges of $q$ alternate between red and green, while in the latter case between red and blue.

Assume first that $q \in \Dblue$. We remove from $C_p$ the edge $(u,v)$, which exists by Invariant I.\ref{i:1}. This results in a path from $u$ to $v$. The cycle $C_q$ that is obtained by this path and the path from $u$ to $v$ in face $q$ is a cycle for $\mathcal{T}^*_p \cup \{q\}$, which satisfies Invariants I.\ref{i:1}-I.\ref{i:5} as we discuss in the following. 

Since $q$ is a face, it has no chords. Hence, the only chord we added to $C_q$ is $(u,v)$ which belongs to $\EredT$. This implies that Invariants I.\ref{i:2}-I.\ref{i:4} are satisfied. Since $q \in \Dblue$, in order to guarantee Invariant~\ref{i:5.i}, we have to ensure that all edges of $q$ that belong to $\EredN$ belong to $C_q$. However, this trivially holds, since every edge of $q$ is on $C_q$ except for edge $(u,v)$, which however belongs to $\EredT$ (and thus not to $\EredN$). To guarantee Invariant I.\ref{i:1}, we have to ensure that the edge of $G$ shared between $q$ and each child $q_i$ of $q$ in $\mathcal{T}^*$ is on $C_q$ ($i=1,\ldots,k$). Since the only edge of $q$ that is not on $C_q$ is the edge $(u,v)$, which is not shared by any child of $q$, it follows that Invariant I.\ref{i:1} is also maintained. It remains to also prove that $C_q$ is simple and plane. The latter property is straight-forward. For the former property, assume for a contradiction that cycle $C_q$ is not simple. Since $C_p$ is simple, one of the newly introduced vertices, call it $w$, is in $C_p$. Since $q \in \Dblue$, $w$ is incident to a red edge, say $(w,z)$, in $q$. Let $q' \in \Dgreen$ be the face of $G$ that is incident to $(w,z)$ and different from $q$. Since $w$ belongs to $C_p$, face $q'$ belongs to $\mathcal{T}^*_p$. Hence, $(w,z) \in \EredN$. Since $q' \in \Dgreen$, none of the endpoints of $(w,z)$ is on $C_p$ due to Invariant I.\ref{i:5.ii}; a contradiction. Hence, $C_q$ is indeed simple. 

Assume now that $q \in \Dgreen$. If $q$ is a leaf in $\mathcal{T}^*$ (i.e., the only edge incident to $q$ that belongs to $\EredT$ is edge $(u,v)$), then $C_p$ is a (simple and plane) cycle also for $\mathcal{T}^*_p \cup \{q\}$, which trivially satisfies Invariants I.\ref{i:1}-I.\ref{i:5}. So, we assume w.l.o.g.\ that $q$ is not a leaf in $\mathcal{T}^*$. It follows that there exist edges of $q$, different from $(u,v)$, that belong to $\EredT$. Denote by $w_1,\ldots,w_\ell$ the endvertices of these edges as they appear in a clockwise traversal of $q$ starting from $u$. We remove from $C_p$ the edge $(u,v)$, which exists by Invariant I.\ref{i:1}. This results in a path from $u$ to $v$. The cycle $C_q$ that is obtained by this path and the path $u \rightarrow w_1 \rightarrow \ldots \rightarrow w_\ell \rightarrow v$ is a cycle for $\mathcal{T}^*_p \cup \{q\}$, which satisfies Invariants I.\ref{i:1}-I.\ref{i:5} as we prove in the following. 

Since $C_q$ passes through all edges of $q$ that belong to $\EredT$ and since these edges are the only edges that are shared between $q$ and the children of $q$ in $\mathcal{T}^*$, Invariant~I.\ref{i:1} is satisfied. Invariants I.\ref{i:2} is satisfied, because edge $(u,v)$, which belongs to $\EredT$, is an internal chord of $C_q$, and all edges of $q$ that belong to $\EredT$ are on $C_q$. For Invariant~I.\ref{i:3}, we argue as follows. Consider an edge $e$ of $q$ that belongs to $\Eblue$. If both the (red) edges that precede and that follow $e$ in $q$ belong to $\EredT$, then $e$ is on $C_q$; otherwise, $e$ is not in the interior of $C_q$. Hence, Invariant~I.\ref{i:3} is satisfied. Invariant~I.\ref{i:4} is trivially satisfied, as $q$ belongs to $\Dgreen$ and therefore by Lemma~\ref{lem:barnetteprops}.\ref{prp:2} does not contain any edge of $\Egreen$. Finally, Invariant~I.\ref{i:5.ii} is satisfied, since the only vertices of $q$ that belong to $C_q$ are the endvertices of the edges of $q$ that belong to $\EredT$ (and thus not to $\EredN$). We conclude this case by mentioning that $C_q$ can be easily proven to be plane and that the fact that $C_q$ is simple can be proved symmetrically to the case in which $q$ belongs to $\Dblue$.     

The base of our recursive algorithm corresponds to the face $\rho \in \Dblue$ that is the root of $\mathcal{T}^*$. In this case, by setting $C_\rho$ to be the facial cycle $\rho$, we trivially satisfy all invariant properties of our algorithm; see Fig.~\ref{fig:sample} for an example.  

\begin{figure}[t]
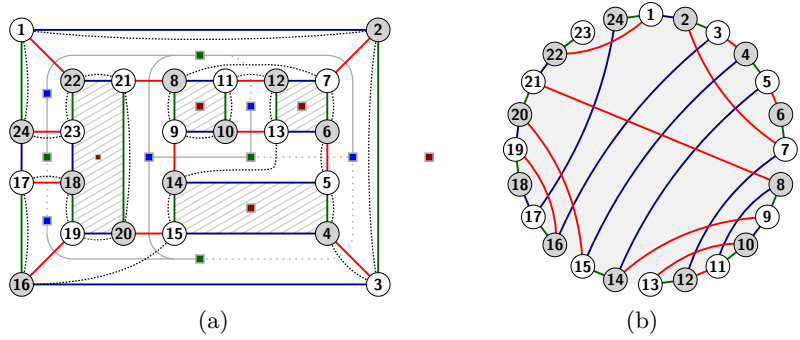

	\centering
	\subfloat[\label{fig:sample-cycle}{}]{
	\includegraphics[page=4,scale=0.6]{cbp3_small}}
	\hfil
	\subfloat[\label{fig:sample-embed}{}]{
	\includegraphics[page=5,scale=0.6]{cbp3_small}}
	\caption{Illustration of 
	(a)~a sample subhamiltonian cycle (dotted drawn), and 
	(b)~a corresponding dispersable book embedding with three pages produced by our algorithm.}
	\label{fig:sample}
\end{figure}

Once we traverse $\mathcal{T}^*$, we have computed a simple and plane cycle $C$, which by Invariants~\ref{i:2}-\ref{i:4}, satisfies Properties~\ref{prp:4}-\ref{prp:6} of our theorem. We show that $C$ is a subhamiltonian cycle of $G$ as follows. Since $\mathcal{T}^*$ is a spanning tree of $G_{bg}^*$, every green edge of $G$ bounds a face that is in $\mathcal{T}^*$, and by Invariant~\ref{i:4} we may assume that both its endpoints are consecutive along $C$. As every vertex is incident to a green edge, it follows that $C$ is indeed a subhamiltonian cycle of $G$.\qed
\end{proof}

Theorem~\ref{thm:barnette} implies that every Barnette graph admits a proper $3$-edge coloring and a two-page book embedding, in which the green edges connect consecutive vertices along the spine, while the red and blue edges must be on different pages. To achieve our initial goal, which was to show that every Barnette graph is dispersable, we only have to assign the green edges to an additional third page. This result is summarized in the following corollary.
 
\begin{corollary}
Every Barnette graph is dispersable.
\end{corollary}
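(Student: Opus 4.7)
My plan is to read off the dispersable embedding directly from the data produced by Theorem~\ref{thm:barnette}. Since the dispersable book thickness satisfies $\dbt{G}\ge\Delta(G)=3$ for any Barnette graph $G$, it suffices to exhibit a dispersable book embedding on three pages.

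The approach is as follows. First I would apply Theorem~\ref{thm:barnette} to obtain the $3$-edge coloring $\Ered \cupdot \Egreen \cupdot \Eblue = E$ together with the subhamiltonian cycle $C$. I take the cyclic sequence of vertices along $C$ as the spine order of the book (equivalently, the vertex order on a circular embedding). The page assignment is the obvious one dictated by the colors: red edges to page~1, blue edges to page~2, green edges to page~3.

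The main things to verify are then straightforward. The matching condition on each page is immediate, because an edge coloring in the sense of Lemma~\ref{lem:barnetteprops} is in particular a proper $3$-edge coloring, so each color class is a perfect matching of the $3$-regular graph $G$. For the non-crossing condition, I would argue color by color using Properties~\ref{prp:4}--\ref{prp:6} of Theorem~\ref{thm:barnette}. By Property~\ref{prp:6}, every green edge joins two vertices that are consecutive along $C$; such edges never cross on any page. By Property~\ref{prp:4}, all red edges lie in the closed interior of the simple plane cycle $C$; viewing the interior of $C$ as a disk and $C$ as the spine (boundary circle), two chords drawn inside a disk cross in the circular embedding if and only if their endpoints interleave along $C$, which would force the planar drawing inside $C$ to have a crossing — impossible since $G$ is plane. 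The blue edges are handled symmetrically using Property~\ref{prp:5} and the exterior of $C$.

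The ``hard part'' has really been absorbed into Theorem~\ref{thm:barnette}: the work of producing a suitable cyclic vertex order and a color partition that respects the two sides of that order is precisely what gives the dispersable embedding for free. So the proof of the corollary will consist almost entirely of these few lines invoking the theorem and reading off the three pages, concluding $\dbt{G}\le 3$ and hence $\dbt{G}=\Delta(G)=3$, i.e., $G$ is dispersable.
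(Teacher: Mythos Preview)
Your proposal is correct and follows essentially the same approach as the paper: use the subhamiltonian cycle $C$ from Theorem~\ref{thm:barnette} as the spine order, put red and blue on separate pages via Properties~\ref{prp:4}--\ref{prp:5} (the standard interior/exterior argument for a subhamiltonian cycle yielding a two-page embedding), and assign the green edges---which lie on $C$ by Property~\ref{prp:6}---to a third page, with the matching condition coming from the proper $3$-edge coloring. Your write-up is in fact slightly more explicit than the paper's about why each page is crossing-free.
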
 

In relation to Barnette's conjecture, our result does not guarantee the existence of a Hamiltonian cycle. However, it guarantees the existence of a subhamiltonian cycles, which contains roughly $\frac{2n}{3}$ edges, where $n$ is the number of vertices of the graph. More precisely, all green edges, which are $\frac{n}{2}$ in total, are part of this cycle. Moreover, the red edges that belong to the set $\EredN$ in the proof of Theorem~\ref{thm:barnette} are by construction also part of this cycle. As a $3$-regular planar graph has exactly $\frac{n}{2}+2$ faces, we may assume w.l.o.g.\ that the number of nodes in $G_{bg}^*$ is at most $$\frac{2}{3}\left( \frac{n}{2}+2 \right) = \frac{1}{3} \left(n + 4 \right),$$ which implies that $$|\EredT| \leq \frac{1}{3} (n + 4) - 1.$$ As a result, the number of edges in $\EredN$ is at least $$\frac{n}{2} - \left( \frac{1}{3} (n + 4) - 1 \right) = \frac{n}{6} - \frac{1}{3}.$$ So, in total the number of green edges and the number of red edges in $\EredN$ is at least $\frac{n}{2} + \frac{n}{6} - \frac{1}{3} = \frac{2n}{3} - \frac{1}{3}$, as initially claimed.

\section{Conclusions}
\label{sec:conclusions}
In this paper, we studied dispersable book embeddings, and we demonstrated that for $k=3,4$, the dispersable book thickness of a $k$-regular bipartite graph is not necessarily $k$, thus disproving an old conjecture by Bernhart and Kainen~\cite{BK79}. 
There exists a number of interesting related questions raised by our work.
 
\begin{enumerate}[1.]
\item A natural question to ask is whether there exist a non-dispersable bipartite graph for every $k \geq 5$. Our
computational experiments confirm the hypothesis for $k=5,6$. For larger values of $k$, combinatorial proofs are needed. 
%In this direction and inspired from the construction of the Folkman graph, we managed to prove, using the SAT formulation presented in Section~\ref{sec:3-regular}, that this is indeed the case for the $6$-regular bipartite graph obtained from the complete graph $K_7$ by first subdividing each of its edges by a vertex and then by creating two more copies for each original vertex of $K_7$ with the same neighborhood.   
%
\item A common property of the Folkman graph and of the Gray graph is that both are not vertex-transitive. So, it is natural to ask whether vertex-transitive regular bipartite graphs are dispersable.
\item We conjecture that all (i.e., not necessarily $3$-connected) $3$-regular bipartite planar graphs are dispersable;
proving or disproving the conjecture is a possible future direction.    
\item Is it possible to guarantee an upper bound on the dispersable book thickness of $k$-regular bipartite graphs (e.g., $k+1$)?
\item What is the complexity of the testing problem whether a given graph is dispersable? The question is of interest also in the case of fixed linear order or fixed edge-to-page assignment.
\item More generally, it would be interesting to study the dispersable book thickness of (non-regular) bipartite planar graphs of maximum degree $\Delta(G)$. Extensive computational experiments indicate that all such graphs admit dispersable book
embeddings with $\Delta(G)$ pages. 
\end{enumerate}

% ============================================================

% ============================================================
%\clearpage
\bibliographystyle{splncs03}
\bibliography{references}

\clearpage
\appendix
\section*{\LARGE Appendix}

\section{Book Embeddings of the Folkman Graph}
\label{app:folkman}
\begin{figure}[h!]
	\centering
	\subfloat[\label{fig:folkman-book-embedding}{}]
	{\includegraphics[page=4,width=.85\textwidth]{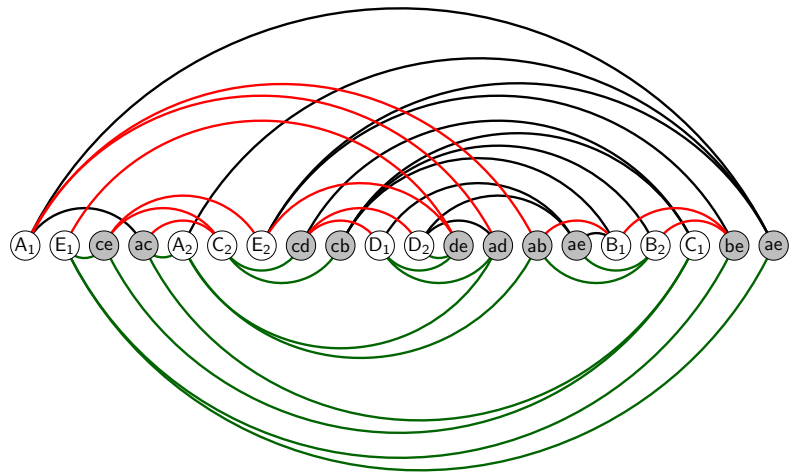}}
	\vfil
	\subfloat[\label{fig:folkman-dispersable}{}]
	{\includegraphics[page=5,width=.85\textwidth]{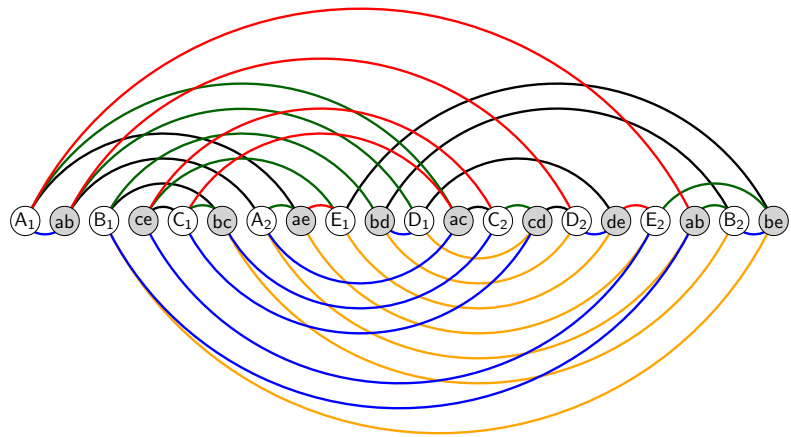}}
	\caption{%
	Different book embeddings of the Folkman graph:
	(a)~ordinary with $3$ pages, and
	(b)~dispersable with $5$ pages.}
	\label{fig:folkman-embeddings}
\end{figure}

\newpage

\section{Book Embeddings of the Gray Graph}
\label{app:gray}
\begin{figure}[h!]
	\centering
	\includegraphics[page=4,width=.85\textwidth]{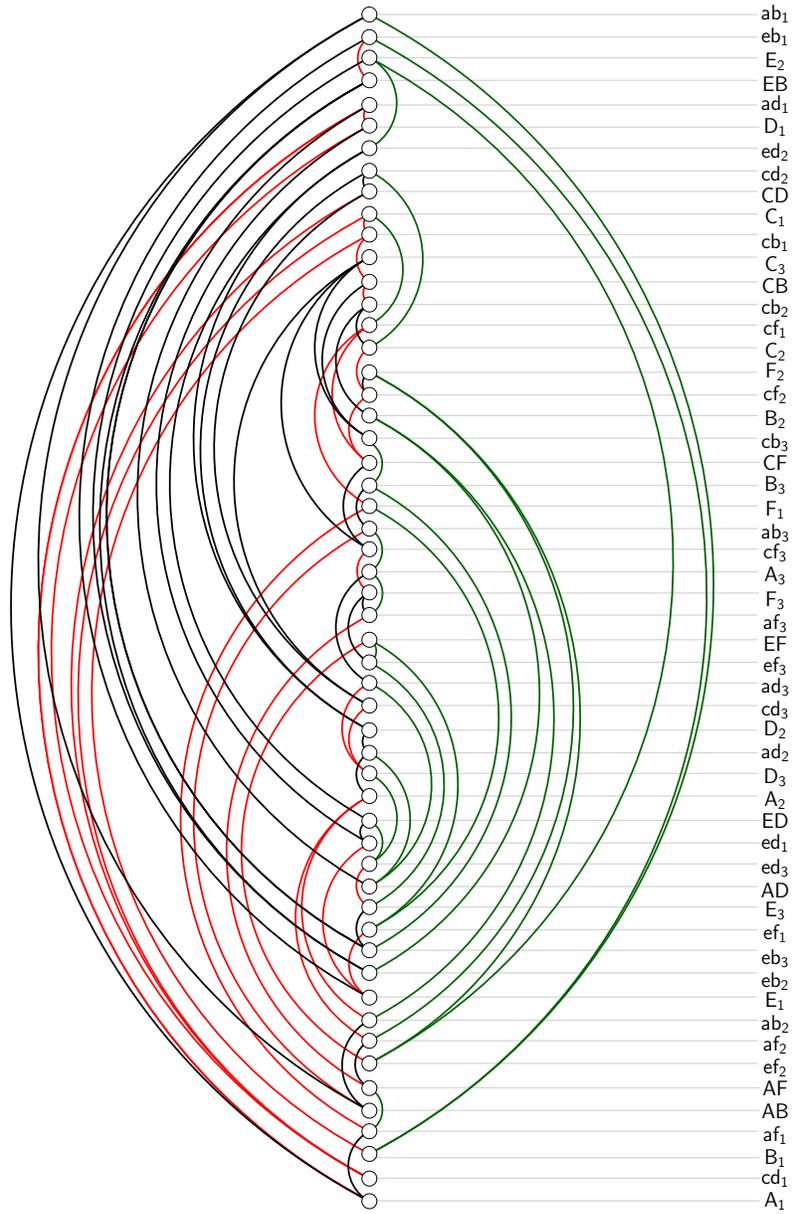}
	\caption{%
	An ordinary book embedding of the Gray graph with $3$ pages.}
	\label{fig:gray-book-embedding}
\end{figure}

\begin{figure}[p]
	\centering
	\includegraphics[page=5,width=.85\textwidth]{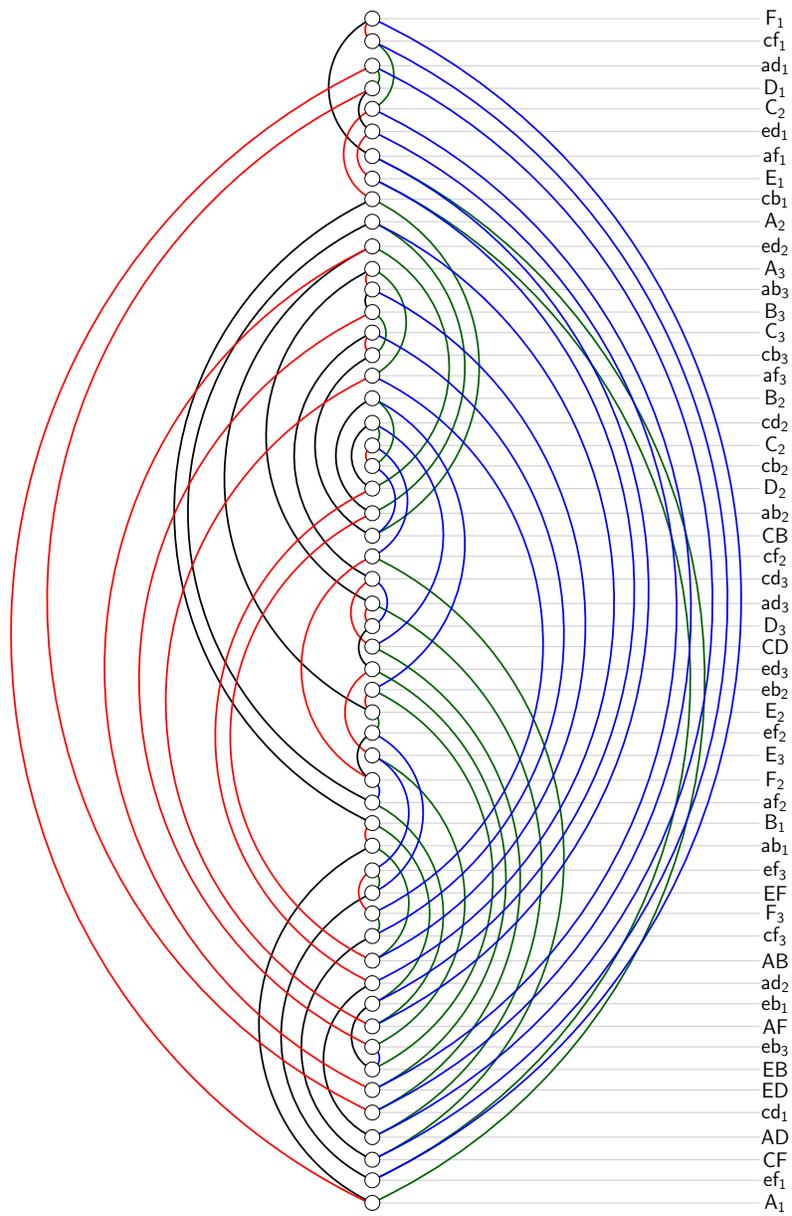}
	\caption{%
	A dispersable book embedding of the Gray graph with $4$ pages.}
	\label{fig:gray-dispersable}
\end{figure}

\end{document}